\documentclass[a4paper]{article}

\usepackage[ruled,vlined,linesnumbered]{algorithm2e}
\usepackage{multirow} 
\usepackage{lmodern}
\usepackage{tikz}
\usepackage{balance}
\usepackage{url}
\usepackage{amssymb}
\usepackage{times}
\usepackage{todonotes}
\usepackage{authblk}
\usepackage{amsthm}
\usepackage{epstopdf}

\usepackage{pgfplots}
\pgfplotsset{filter discard warning=false}

\pgfplotsset {
graph/.style={
height=57mm,
width=82mm,
legend style={draw=none, fill=none, font=\scriptsize},
  ticklabel style={font=\tiny},
xlabel near ticks,
ylabel near ticks},
}

\pgfplotsset {
smallgraph/.style={
height=48mm,
height=50mm,
width=61mm,
legend style={draw=none, fill=none, font=\scriptsize},
  ticklabel style={font=\tiny},
xlabel near ticks,
ylabel near ticks,
},
}

\newcommand{\ol}[1]{\overline{#1}}
\newcommand{\ul}[1]{\underline{#1}}

\newcommand{\ar}{\gamma}

\newcommand{\co}{\mbox{Cost}}

\newtheorem{theorem}{Theorem}
\newtheorem{lemma}[theorem]{Lemma}
\newtheorem{proposition}{Proposition}
\newtheorem{corollary}{Corollary}
\newtheorem{definition}{Definition}
\newtheorem{example}{Example}

\begin{document}

\title{Ordering Selection Operators Using the Minmax Regret Rule}



\author[1]{Khaled H. Alyoubi}
\author[2]{Sven Helmer}
\author[1]{Peter T. Wood}
\affil[1]{
Department of Computer Science and Information Systems,
Birkbeck, University of London,
London WC1E 7HX, UK}
\affil[2]{
Faculty of Computer Science,
Free University of Bozen-Bolzano,
39100 Bolzano, Italy}

\date{}

\maketitle

\begin{abstract}
Optimising queries in real-world situations under imperfect conditions is
still a problem that has not been fully solved. We consider
finding the optimal order in which to execute a given set of selection operators
under partial ignorance of their selectivities. The selectivities are modelled
as intervals rather than exact values and we apply a concept from decision
theory, the minimisation of the maximum regret, as a measure of optimality.
We show that the associated decision problem is NP-hard,
which renders a brute-force approach to solving it
impractical.  Nevertheless, by investigating properties of the problem
and identifying special cases which can be solved in
polynomial time, we gain insight that we use to
develop a novel heuristic for solving the general problem.
We also evaluate minmax regret query optimisation experimentally, showing that
it outperforms a currently employed strategy of optimisers that uses
mean values for uncertain parameters.

\end{abstract}


\section{Introduction}
\label{sec:intro}

Although query optimisation in database management systems (DBMSs) has been
a topic of research for decades, there are still important unresolved
issues. In his recent blog post \cite{Lohman14}, Guy Lohman 
highlights errors made in estimating cardinalities as a crucial 
factor. These kinds of errors cause optimisers to generate 
query execution plans that are way off the target in terms of efficiency. 
Consequently, an optimiser should try to avoid potentially bad
plans rather than strive for an optimal plan based on unreliable information.

For typical workloads, a DBMS can compile statistical data over time 
to obtain a fairly accurate picture. For instance,
estimating the selectivities of simple predicates on base relations
in a relational database is fairly well understood and can be done quite
accurately \cite{Garo02,Ioan03}. However, the situation changes once systems
are confronted with very unevenly distributed data values or 
predicates that are complex.

Trying to estimate selectivities in dynamic settings, such as data streams
\cite{Sriv05}, or in non-relational contexts, such as XML databases
\cite{Poly02,Zhang05}, also
poses challenges. It may even be impossible
to obtain any statistical data, because the query is
running on remote servers \cite{Zado02}. 
Detailed information may also not be available because a user issues an
atypical ad-hoc query or utilises parameter markers in a query. We propose to
use techniques from decision theory for making decisions under 
ignorance\footnote{Sometimes these are also called 
	decisions under uncertainty. We refer to them as decisions under ignorance to
	distinguish them from probability-based methods.}, 
meaning that we know what the alternatives and their outcomes
are, but we are unable to assign concrete probabilities to 
them \cite{Peterson09}.

In our approach we propose to build a robust 
query optimiser that is aware of the
unreliability of database statistics and considers
this during optimisation. 
When executing a query, the DBMS encounters a particular
instance of concrete parameter values: we call
this a {\em scenario}. The problem is that, during
the prior optimisation step, the optimiser does not know which scenario
the DBMS will face during plan execution.  Additionally, 
it is highly unlikely that there is a single 
execution plan that will yield the optimal cost 
for every potential scenario. Consequently,
our goal is to choose a query execution plan
that performs reasonably well regardless of the
scenario it encounters. More specifically,
we try to minimise the difference between
the cost of a plan $p$ and the cost of the optimal
plan when $p$ is executed under its worst-case
scenario. This is called {\em minmax regret
	optimisation} (MRO), which is a well-known technique for making decisions
under ignorance.  Previous work on query optimisation has considered
measures of robustness for query plans~\cite{Babcock05,Babu05,Markl04}, 
but not in terms of MRO.

In this paper, we focus on the selection operator $\sigma$,
an operator common to many data querying languages.    
Selection is sometimes called a filter operator
in contexts such as data stream processing \cite{AvHe00,BMMNW04} and
sensor networks \cite{DGHM05}, where there is renewed interest in  improving
the efficiency of processing these operators. A very common setting is
determining the order in which to apply a set of commutative filters to a
stream or a set of data items, e.g.\ tuples of a relation, so as to keep the
processing costs to a minimum.

There are well-known techniques for ordering selection operators to filter out
as many tuples as possible as early as possible at the lowest possible cost
\cite{HelSto93}. However, these techniques rely on having accurate values for the
operators' selectivities, i.e.,\ the percentage of tuples passing a filter, and
their processing costs (per tuple). Getting the estimation of selectivities
(and/or costs) wrong can lead to high overall costs for the pipelined
execution.

Our technique is based on using {\em intervals\/} rather than exact values for
describing selectivities, aiming at generating query plans that are minmax regret optimal.
However, identifying such plans, even for selection ordering, turns out to be
NP-hard.  As a result, we leave the investigation of further operators for future work
and focus first on finding a good heuristic for MRO selection ordering.

Intervals can provide a useful way to model selectivities when exact values
are unknown or hard to compute.  For example, Babu et al.~\cite{Babu05} compute
intervals from single-point estimates in order to model levels of uncertainty
regarding the accuracy of estimates, based on how such estimates were derived.
Moerkotte et al.~\cite{Moerkotte14} consider histograms which
guarantee a maximum multiplicative error (called the q-error)
for cardinality estimates.  Given such an estimate, the true cardinality
(selectivity) can easily be modelled by an interval, as we show in
Section~\ref{sec:related}.

For another situation in which interval selectivities arise, consider
estimating the selectivities of string predicates which perform substring matching
using SQL {\tt like}, a problem known to be difficult~\cite{CGG04}.  As an example,
let us consider a database in which
email messages are stored in a relation {\tt emails}, with attributes
such as {\tt sender}, {\tt subject} and {\tt body}
(the textual contents of the email).  Assume that many queries use selection
predicates such as {\tt subject like `\%invest\%'}, so the database maintains
indexes on words and on 2-grams (say) of words which allow it also to provide
selectivities for these.

Although the database maintains an index on words, the selectivity for the word
`invest' will be an underestimate for the selectivity of {\tt subject like `\%invest\%'}
since the strings `reinvest' and `investigation' (and many others) also match this
predicate.  Even if we are able to enumerate all words containing the string
`invest', we do not know how to combine their individual selectivities into
a single selectivity.  Instead we can use an interval selectivity with the
exact match as a lower estimate.  As the upper estimate, we can use the minimum
selectivity of all the 2-grams of `invest' since any string containing `invest' must
contain all of its 2-grams as well.

\begin{example}\label{ex:enron}
	As a concrete example, consider the following query on the
	Enron email data\footnote{\url{http://www.cs.cmu.edu/~./enron/}}:
	\begin{verbatim}
	select sender
	from   emails
	where  body like `%action%' and
	body like `%like%' and
	subject like `%use%';
	\end{verbatim}
	Let us denote the three predicates by $A$, $L$ and $U$ (for `action',
	`likes' and `use').  The interval selectivities for the three predicates,
	as computed using the method proposed above and explained in more detail
	in Section~\ref{sec:Max-min_Exp}, are
	$[0.03, 0.68]$ for $A$, $[0.17, 0.27]$ for $L$ and $[0.0008, 0.06]$ for $U$.
	Even if we consider only the upper and lower bounds of these intervals,
	they give rise to 8 possible scenarios.  No single plan (order) is optimal
	for all 8 scenarios, so the best we can do is find the plan which minimises
	the maximum regret.  This plan corresponds to the order $UAL$.  The maximum
	regret for this plan arises in the scenario when $U$ has its maximum
	selectivity, while $A$ and $L$ have their minimum selectivities (in this case,
	the predicates $U$ and $A$ should be swapped to get the optimal order).
	
	In the case of the above query, our heuristic finds the minmax regret optimal
	solution. 
	By way of contrast, an alternative heuristic such as that which takes the midpoints of the intervals and produces an optimal ordering based on those, produces the plan $ULA$.
	This plan has a maximum regret which is
	44\% worse than the minmax regret optimal plan.\hfill$\Diamond$
\end{example}

We should mention that the technique of using intervals can be applied to
other approximate or error-tolerant queries as well.  All we need is the
selectivity for an exact query as the lower bound and the selectivity
for a query that determines a candidate set with false positives as the upper bound.

Our contributions in this paper are as follows: 

\begin{itemize}
	
	\item We formalise the problem of optimal selection ordering
	under partial ignorance, i.e.,\ when selectivities are given
	as intervals.
	
	\item We identify a number of properties of the problem, including
	that (i)~only {\em extreme\/} scenarios (i.e., in which each operator
	takes on its minimum or maximum selectivity) need to be considered,
	(ii)~operators which {\em dominate\/} others (i.e., both their maximum
	and minimum selectivities are smaller) must appear before the
	dominated ones 
	in any optimal plan, and (iii)~the decision version of the problem
	is NP-hard.
	
	\item We investigate a number of special cases in which selection
	ordering under partial ignorance can be solved in polynomial time.
	Along the way, we also identify other important properties of scenarios
	in MRO selection ordering.
	
	\item Based on our findings we develop efficient optimisation heuristics,
	which we evaluate experimentally, using synthetic data, the Enron email
	data, and the Star Schema Benchmark (SSB) \cite{SSB13}.  
	The experiments demonstrate the benefit of
	using minmax regret optimisation, in some cases halving the 
	deviation from the optimal plan compared to conventional techniques.
	
\end{itemize}

The remainder of this paper is organised as follows.  We start by
reviewing related work on selection ordering and optimisation techniques
in the next section.  In Section~\ref{sec:formalDef}, we formalise the
problem of selection ordering under partial ignorance, using
minmax regret optimisation as the criterion for optimality.
Various properties of the problem are identified in Section~\ref{sec:properties}, while the proof of NP-hardness is given in Section~\ref{sec:hardness}.
Section~\ref{sec:specialCases} presents some special cases of the problem
which can be solved in polynomial time.  Our heuristic algorithm is given
in Section~\ref{sec:Max-min_Heuristic}, with its experimental evaluation
presented in Section~\ref{sec:Max-min_Exp}.  Finally, we conclude in
Section~\ref{sec:conclusion}.

\section{Background and Related Work}
\label{sec:related}

We assume we are given a set $S = \{\sigma_1, \sigma_2, \dots, \sigma_n\}$
of selection operators, or equivalently a conjunctive predicate
$p_1 \wedge p_2 \wedge \cdots p_n$.  The {\em selectivity\/} $s_i$ of
operator $\sigma_i$ or predicate $p_i$ is the fraction of tuples
that satisfy the operator or predicate.  Associated with each operator
$s_i$ is also a cost $c_i$, which is the cost per tuple of evaluating
the operator. 

Most database systems keep statistics allowing them to estimate
the selectivity for single attributes fairly accurately.  For the
joint selectivity of multiple attributes, much early work and
many systems make the {\em attribute value independence\/} (AVI)
assumption.  This assumes that the selectivity of a set of
operators $\{ \sigma_{i_1}, \sigma_{i_2} \ldots \sigma_{i_m} \}$ is
equal to $s_{i_1} \times s_{i_2} \times \cdots \times s_{i_m}$.
If instead a system stores (some) joint selectivities (it is
infeasible for it to store all of them), we can use the AVI
assumption to ``fill in the gaps'' or use the estimation approach
advocated in~\cite{Entropy07}.

\subsection{Selection Ordering}
\label{sec:SelectionOrdering}

Assuming we have accurate values for the selectivity $s_i$ and cost $c_i$
of selection operator $\sigma_i$, we can calculate
the {\em rank\/} $r_i$ of $\sigma_i$:
\begin{eqnarray}
	\label{eq:rank}
	r_i = (s_i - 1) / c_i
\end{eqnarray}
Given a set of selection operators, sorting and executing them
in non-decreasing order of their ranks results in the
minimal expected pipelined processing cost \cite{KBZ86} under the AVI assumption.
Clearly, the computation of the ranks and the sorting can be done in polynomial time.
A similar argument applies if a query uses a conjunction of predicates
on the same relation, and query evaluation uses a simple table scan.
In such a case, the optimiser should test the predicates in the order which
minimises the total number of tests.
Basically, ordering selection operators optimally is a solved problem, but
only when given exact values for the $s_i$ and $c_i$.

Similar optimisation problems have been studied in the context of sequential
testing. Here the goal is to find faulty components as quickly as possible by
testing them one by one. Each component has a probability of working correctly
and a cost for testing it. One of the earliest proposed solutions \cite{John56}
relies on ranking the components and then ordering them by their ranks, very
similar to the selection ordering described above.

\subsection{Optimising under Uncertainty}

In the following, we review different approaches for dealing with uncertain
parameters during query optimisation.
A common approach of many optimisers is to use the
mean or modal value of the parameters and then find the plan with least cost
under the assumption that this value remains constant during query execution,
an approach called Least Specific Cost (LSC) in \cite{p138-chu}. As Chu et
al.\ point out in \cite{p138-chu}, if the parameters vary significantly, 
this does not guarantee finding the plan of least expected cost.

An alternative
is to use probabilistic information about the parameters fed into the
database optimiser, an approach known as Least Expected Cost (LEC)
\cite{p138-chu}. (A discussion regarding the circumstances under which LEC or
LSC is best appears in \cite{p293-chu}.) 
In decision-theoretic terms, we are making decisions under risk, 
maximising the expected utility. However,
probability distributions for the possible parameter values are needed 
to make this approach work, whereas in our case we do not have these
prerequisites.

In parametric query optimisation
several plans can be precompiled and then, depending
on the query parameters, be selected for
execution \cite{Gang98}. 
However, if there is a large number of optimal
plans, each covering a small region of the 
parameter space, this becomes problematic.
First of all, we have to store all these plans.
In addition, constantly switching from one plan
to another in a dynamic environment (such as stream
processing) just because
we have small changes in the parameters introduces
a considerable overhead. In order to amend this,
researchers have proposed reducing the number
of plans at the cost of slightly decreasing the
quality of the query execution \cite{Harish07}.
Our approach can be seen as an extreme form of parametric query optimisation
by finding a single plan that covers the whole parameter space.

Another approach to deal with the lack of reliable statistics
is adaptive query processing, in which an execution plan
is re-optimised while it is running \cite{AvHe00,Babu05,Kabra98,Markl04}. 
It is far from trivial to determine at which point to 
re-optimise and adaptive query processing may also 
involve materialising large
intermediate results. More importantly, this means modifying the whole query
engine; in our approach no modifications of the actual query processing are
needed.  A gentler 
approach is the incremental execution of a query
plan \cite{Neu13}. Deciding on how to decompose
a plan into fragments and putting them together is 
still a complex task, though.

Estimates based on intervals arise explicitly in~\cite{Babu05}
and implicitly in~\cite{Moerkotte14}.  As mentioned in the
Introduction, Babu et al.~\cite{Babu05} use intervals to
model uncertainty in the accuracy of a single-point estimate.
Uncertainty is represented by a value from 0 (none) to 6 (very high).
Upper and lower bounds for the single-point estimate are then calculated
using the estimate and the uncertainty value.  During optimisation,
only three scenarios, those using the low estimates, the exact estimates and the high estimates,
are considered, rather than all scenarios as in our approach.
Moerkotte et al.~\cite{Moerkotte14} study histograms which provide
so-called q-error guarantees.  Given an estimate $\hat{s}$ for $s$,
the q-error of $\hat{s}$ is $\max(s/\hat{s}, \hat{s}/s)$.  An
estimate is $q$-acceptable if its q-error is at most $q$.  So if
an estimate $\hat{s}$ is $q$-acceptable, the true value $s$ lies in the
interval $1/q \times \hat{s} \leq s \leq q \times \hat{s}$.

Notions of robustness in query optimisation have been considered
in~\cite{Babcock05,Babu05,Markl04}.  Babcock and Chaudhuri~\cite{Babcock05}
use probability distributions derived from sampling as well as user preferences
in order to tune the predictability (or robustness) of query plans versus their performance. 
For Markl et al.~\cite{Markl04}, robustness means not continuing to execute to completion 
a query plan which is found to be suboptimal during evaluation; instead re-optimisation is performed.  
On the other hand, Babu et al.~\cite{Babu05} consider a plan to be robust only if its cost is within 
e.g.\ 20\% of the cost of the optimal plan.  None of these papers consider robustness in the sense of MRO. 
Moreover, these techniques need additional statistical information to work.


\subsection{Optimising under Ignorance}

Minmax regret optimisation (MRO) has been applied to a 
number of optimisation problems where some of
the parameters are (partially) unknown \cite{survey}.  
The complexity of the MRO version
of a problem is often higher than that of the original problem. 
Many optimisation problems with polynomial-time solutions turn out to be 
NP-hard in their MRO versions~\cite{survey}.

One example is minimising the {\em total flow time} (TFT),
in which $n$ jobs are scheduled on a single 
machine~\cite{Kasperski-book}.
The {\em flow time\/} of a job is the sum of its
processing time and the time it has had to wait before starting execution.
The total flow time is the sum of the flow times of all $n$ jobs.  
This scheduling problem can be solved in polynomial time given
exact job lengths (by sorting the jobs in non-decreasing
order of their processing times~\cite{averbakh}), but 
becomes NP-hard in its MRO variant~\cite{averbakh}.
Researchers have developed
approximation algorithms for the problem; for example, a 2-approximation
algorithm, bounding the approximate solution to be 
no more than twice the optimal solution,
is proposed in~\cite{Kasperski-book}.  

Among all MRO problems, TFT is the one closest to the problem we are
investigating. However, there are substantial differences: the formula for
computing the cost of a schedule is much simpler for TFT,
and the approach chosen to obtain a 2-approximation does not guarantee
a bound for MRO selection ordering, as we show in Section~\ref{sec:properties}.

\section{Selection Ordering MRO}
\label{sec:formalDef}

In this section we give a formal definition of the generalised selection
ordering problem with partially defined selectivities. The 
exact costs of selection
operators can also be unknown, but for the moment
we restrict ourselves to partially defined selectivities.


\subsection{Basic Definitions}
\label{sec:basic-defs}

We start out with definitions for selection operators with interval
selectivities and basic properties.

\begin{definition}
	Given a set $S = \{\sigma_1, \sigma_2, \dots, \sigma_n\}$ of selection
	operators, each has a selectivity $s_i$ and a cost $c_i$. Each selectivity is
	defined by a closed interval: for $1 \leq i \leq n$, $s_i = [\ul{s}_i,
	\ol{s}_i]$ with $\ul{s}_i, \ol{s}_i \in [0,1]$ and $\ul{s}_i \leq
	\ol{s}_i$. For $1 \leq i \leq n$, $c_i \in \mathbf{R}^+$
	represents the cost of $\sigma_i$ for processing an input tuple.
\end{definition}

Depending on their selectivity intervals selection operators may relate to
each other in a special way.  Later on we exploit this property in order
to optimise selection orders.

\begin{definition}
	Given two selection operators $\sigma_i, \sigma_j \in S$,
	we say that $\sigma_i$ {\em dominates\/} $\sigma_j$ if
	$\ul{s}_i \leq \ul{s}_j$ and $\ol{s}_i \leq \ol{s}_j$.
	The set $S$ of operators is called {\em dominant\/} if 
	for each pair $\sigma_i, \sigma_j \in S$ it is the case 
	that either $\sigma_i$ dominates $\sigma_j$ or 
	$\sigma_j$ dominates $\sigma_i$.
\end{definition}

Later on, it will be helpful to consider a special case of
dominant sets of operators.

\begin{definition}
	Given two selection operators $\sigma_i, \sigma_j \in S$,
	we say that $\sigma_i$ {\em strictly dominates\/} $\sigma_j$ if $\ol{s}_i \leq \ul{s}_j$.
	A {\em strictly dominant\/} set is defined analogously to a dominant set.
\end{definition}

If for two selection operators $\sigma_i, \sigma_j \in S$, neither 
$\sigma_i$ dominates $\sigma_j$ nor $\sigma_j$ 
dominates $\sigma_i$, then $\sigma_i$ and $\sigma_j$ form a {\em
	nested\/} pair of operators.  So, operator $\sigma_i$ is {\em nested\/} in
$\sigma_j$ if $\ul{s}_j < \ul{s}_i$ and $\ol{s}_i < \ol{s}_j$.

\begin{example}\label{ex:definitions}
	Let $S = \{\sigma_1, \sigma_2, \sigma_3\}$ be a set of selection operators,
	with selectivities $s_1=[.2, .8]$, $s_2=[.3, .5]$ and $s_3=[.1, .4]$.
	Operator $\sigma_3$ dominates both $\sigma_1$ and $\sigma_2$,
	but does not strictly dominate either of them.  Because
	$\sigma_2$ is nested in $\sigma_1$, the set $S$ is not dominant.
	\hfill$\Diamond$
\end{example}

\begin{definition}
	An assignment of a concrete value to each of the $n$ selectivities is called a {\em
		scenario} and is defined by a vector $x = (s_1, s_2, \dots, s_n)$, with
	$s_i \in [\ul{s}_i, \ol{s}_i]$. 
\end{definition}

Every time we actually run a query, we
encounter one scenario. However, during the optimisation step we are unaware
of which scenario we will face.
The set of all possible scenarios can be described by
$X = \{ x \mid x \in [\ul{s}_1, \ol{s}_1] \times[\ul{s}_2, \ol{s}_2] 
\times \dots
\times[\ul{s}_n, \ol{s}_n]\}$. There are certain scenarios we are particularly
interested in:

\begin{definition}
	A scenario $x_{ext} = (s_1, s_2, \dots, s_n)$ is called an {\em extreme scenario} 
	if, for each $1 \leq i \leq n$, $s_i$ is equal to either $\ul{s}_i$ or $\ol{s}_i$.
\end{definition}

Let $\pi^n$ be the set of all possible permutations over $1,2, \dots, n$. For
$\pi_j \in \pi^n$, $\pi_j(i)$ denotes the $i$-th element of $\pi_j$. 

\begin{definition}
	A query
	execution plan $p_j$ is a permutation $\sigma_{\pi_j(1)}$, $\sigma_{\pi_j(2)},
	\dots, \sigma_{\pi_j(n)}$ of the $n$ selection operators. The set of all
	possible query execution plans is given by
	\[
	P = \{ p \mid p = \sigma_{\pi(1)}, \sigma_{\pi(2)}, \dots, \sigma_{\pi(n)} \mbox{ such that } \pi \in \pi^n \}.
	\]
\end{definition}

\noindent
The cost of evaluating plan $p_j$ under a given scenario $x$ is
\begin{eqnarray}
	\label{eq:cost}
	\co(p_j,x) & = & \Omega (c_{\pi(1)} + s_{\pi(1)} c_{\pi(2)} +
	s_{\pi(1)} s_{\pi(2)} c_{\pi(3)} \nonumber \\ 
	&  & \: + \: \cdots \: + \prod_{i=1}^{n-1} s_{\pi(i)} c_{\pi(n)}) \nonumber \\
	& = & \Omega \left( \sum_{i=1}^n 
	\left( \prod_{j=1}^{i-1} s_{\pi(j)} \right) c_{\pi(i)} \right)
\end{eqnarray}
$\Omega$ is the cardinality of the relation on which we execute the selection
operators. Currently we make the AVI assumption that the selection predicates are
stochastically independent.  Extending our approach to situations in which
(some) joint selectivities are known is a topic for future work.


\begin{example}\label{ex:cost}
	Recall the set $S = \{\sigma_1, \sigma_2, \sigma_3\}$ of selection operators
	from Example~\ref{ex:definitions},
	with selectivities $s_1=[.2, .8]$, $s_2=[.3, .5]$ and $s_3=[.1, .4]$.
	There are 8 extreme scenarios for this example, one being given by
	scenario $x_1 = (\ul{s}_1, \ul{s}_2, \ul{s}_3) = (.2, .3, .1)$.
	One the the 6 possible plans for $S$ is given by plan 
	$p_1 = \sigma_1 \sigma_2 \sigma_3$.  Assuming that $\Omega$
	and each cost $c_i$ is set to $1$, we can calculate the
	cost of plan $p_1$ under scenario $x_1$, $\co(p_1,x_1)$,
	using Equation~(\ref{eq:cost}) as follows:
	\[
	\co(p_1,x_1) = (1 + .2  + .2 \times .3 ) = 1.26
	\]
	\hfill$\Diamond$
\end{example}

Let $p_{opt(x)}$ stand for the query execution plan having the
minimal cost for scenario $x$, and let $\pi_{opt(x)}$ be the permutation of the
selection operators for this plan. Since we are facing multiple scenarios,
the criterion for evaluating the optimality of a plan $p_j$ is
different to the one used in the classical selection ordering problem.
We utilise minmax regret optimisation to determine the quality
of a plan.

\subsection{Minmax Regret Optimisation}

Below we define the regret for a plan given a scenario, the
maximal regret for a plan, and finally the problem of finding
a plan that minimises the maximal regret.

\begin{definition}
	Given a plan $p$ and a scenario $x$, the absolute {\em regret\/} $\ar(p,x)$ of
	$p$ for $x$ is:
	\begin{eqnarray}
		\label{eq:regret}
		\ar(p,x) = \co(p,x) - \co(p_{opt(x)},x)
	\end{eqnarray}
	where $p_{opt(x)}$ is the optimal plan for scenario $x$.
	The maximal regret of a plan is the regret for its worst-case
	scenario and is simply defined as $\max_{x \in X} (\ar(p,x))$.
\end{definition}

\begin{definition}
	Given the set $P$ of all possible execution plans and the set $X$ of all
	possible scenarios, minimising the maximal regret is done as follows
	(where $R(P,X)$ is the optimal regret):
	\begin{eqnarray*}
		R(P,X) = & \min_{p \in P} (\max_{x \in X} (\ar(p,x)))
	\end{eqnarray*}
	Given a set $S$ of selection operators, let $P(S)$ denote the set of possible
	plans for $S$ and $X(S)$ denote the set of possible scenarios for $S$.  Then
	the {\em minmax regret optimisation\/} problem for $S$, which we denote
	$MRO(S)$, is to find a plan whose maximum regret matches $R(P(S),X(S))$.  For
	simplicity and when there is no confusion, we also use $MRO(S)$ to denote
	a plan which minimises $R(P(S),X(S))$.
\end{definition}

\begin{example}\label{ex:regret}
	Recall once again the set $S = \{\sigma_1, \sigma_2, \sigma_3\}$ of selection operators
	from Examples~\ref{ex:definitions} and~\ref{ex:cost},
	with selectivities $s_1=[.2, .8]$, $s_2=[.3, .5]$ and $s_3=[.1, .4]$.  For
	simplicity, assume that all operators have the same cost $1$ and that the relation
	has cardinality $\Omega = 1$ (so to get the real costs, the numbers in 
	Table~\ref{table:regretTable} 
	just have to be multiplied by the true cardinality).  
	To find the plan which minimises the maximum
	regret, we can perform an exhaustive enumeration of all possible execution
	plans under every possible scenario.  We show later in
	Theorem~\ref{th:extreme} that it is sufficient to consider only the extreme
	scenarios since the worst case scenario for any plan is always an extreme one.
	Hence, if there are $n$ operators, we need to consider $n!$ different execution
	plans under each of $2^n$ extreme scenarios.  For our example,
	Table~\ref{table:regretTable} shows the 48 regret values for the 6 possible
	plans under each of 8 extreme scenarios.

	For example, recall from Example~\ref{ex:cost} that the cost of
	the first plan $p_1 = \sigma_1 \sigma_2 \sigma_3$ under
	scenario $x_1 = (\ul{s}_1, \ul{s}_2, \ul{s}_3) = (.2, .3, .1)$ is $1.26$.
	The optimal plan $p_{opt(x)}$ for any scenario $x$ is one in which the
	operators are in non-decreasing order of their selectivities. Therefore, the
	optimal plan for scenario $x_1$ is $p_{opt(x_1)} = \sigma_3 \sigma_1 \sigma_2$
	and its cost is:
	\[
	\co(p_{opt(x_1)},x_1) = (1 + .1  + .1 \times .2 ) = 1.12
	\]
	The regret of plan $p_1$ under scenario $x_1$ using Equation~(\ref{eq:regret}) is:
	\begin{eqnarray*}
		\ar(p_1,x_1) & = & \co(p_1,x_1) - \co(p_{opt(x_1)},x_1) \\
		& = & 1.26 - 1.12 = 0.14
	\end{eqnarray*}
	In order to find the minmax regret solution, the maximum regret of each plan
	needs to be found. For plan $p_1$, the maximum regret is $1.05$ which occurs
	in scenario $(\ol{s}_1, \ol{s}_2, \ul{s}_3)$, its worst-case scenario.  The
	maximum regret for each plan is shown in bold face in
	Table~\ref{table:regretTable}.

	Finally, we are looking for the plan with the smallest maximum regret
	(i.e.,\ the smallest value in the last column of
	Table~\ref{table:regretTable}).  As a result the minmax regret solution,
	$MRO(S)$, is plan $\sigma_3 \sigma_1 \sigma_2$, which has the best performance
	among all plans when confronted with their worst-case scenarios.
	\hfill$\Diamond$
\end{example}

\begin{table}
	\centering
	\setlength{\tabcolsep}{2.2pt}
	\begin{tabular}{ | l | c | c | c | c | c | c | c | c || c | }
		\hline
		&$\ul{s}_1$ & $\ul{s}_1$ & $\ul{s}_1$ & $\ul{s}_1$ & $\ol{s}_1$ & $\ol{s}_1$ & $\ol{s}_1$ & $\ol{s}_1$ & Max \\
		& $\ul{s}_2$ & $\ul{s}_2$ & $\ol{s}_2$ & $\ol{s}_2$ & $\ul{s}_2$ & $\ul{s}_2$ & $\ol{s}_2$ & $\ol{s}_2$ & Regret\\
		& $\ul{s}_3$ & $\ol{s}_3$ & $\ul{s}_3$ & $\ol{s}_3$ & $\ul{s}_3$ & $\ol{s}_3$ & $\ul{s}_3$ & $\ol{s}_3$ &  \\
		\hline
		$\sigma_1\sigma_2\sigma_3$ & 0.14 & 0 & 0.18 & 0.02 & 0.91 & 0.62 & \textbf{1.05} & 0.6 & 1.05 \\ \hline
		$\sigma_1\sigma_3\sigma_2$ & 0.1 & 0.02 & 0.1 & 0 & \textbf{0.75} & 0.7 & 0.73 & 0.52 & 0.75 \\ \hline
		$\sigma_2\sigma_1\sigma_3$ & 0.24 & 0.1 & 0.48 & 0.32 & 0.41 & 0.12 & \textbf{0.75} & 0.3 & 0.75 \\ \hline
		$\sigma_2\sigma_3\sigma_1$ & 0.21 & 0.16 & \textbf{0.43} & 0.42 & 0.2 & 0 & 0.4 & 0.1 & 0.43 \\ \hline
		$\sigma_3\sigma_1\sigma_2$ & 0 & 0.22 & 0 & 0.2 & 0.05 & \textbf{0.3} & 0.03 & 0.12 & \textbf{0.3} \\ \hline
		$\sigma_3\sigma_2\sigma_1$ & 0.01 & 0.26 & 0.03 & \textbf{0.32} & 0 & 0.1 & 0 & 0 & 0.32 \\ \hline
	\end{tabular}
	\caption{The regret for each plan under each scenario in
		Example~\ref{ex:regret}.}
	\label{table:regretTable}
\end{table}


In the above example, it is interesting to consider which scenario gives rise
to the maximum regret for each plan.  Note that for each plan its worst-case
scenario is one in which the operators in some initial sequence in the plan
each take on their maximum selectivity followed by the remaining operators
taking on their minimum selectivity.  We call such a scenario a {\em
	max-min\/} scenario.

\begin{definition}
	Let $p$ be the plan $\sigma_{\pi(1)}, \sigma_{\pi(2)}, \dots,
	\sigma_{\pi(n)}$.  A scenario for $p$ is called a {\em max-min\/} scenario if
	there is a $0 \leq k \leq n$ such that for all $1 \leq i \leq k$, $s_{\pi(i)}
	= \ol{s}_{\pi(i)}$, and for all $k+1 \leq i \leq n$, $s_{\pi(i)} =
	\ul{s}_{\pi(i)}$.  
\end{definition}

So the first $k$ operators in $p$ take on their maximum
selectivity, while the rest take on the minimum.  
Note that for a plan $p$ with $n$ operators, there are $n+1$
max-min scenarios.  Max-min scenarios are the only scenarios considered by the
max-min heuristic we develop in this paper.  However, it is important to state
that, in general, the worst-case scenario for a plan may not be a max-min
scenario.


\section{Properties of MRO}
\label{sec:properties}

Before presenting algorithms for solving the MRO selection
ordering problem, we identify some of its important
properties.

\subsection{Extreme Scenarios}

In order to determine the worst-case scenario of a plan,
i.e., the scenario for which a plan exhibits its largest
regret, we only have to check extreme scenarios.

\begin{theorem}
	\label{th:extreme}
	The worst-case scenario for any query plan $p$ is always an extreme scenario.
\end{theorem}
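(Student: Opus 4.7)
The plan is to show that, for a fixed plan $p$, the regret $\ar(p, x)$ is a convex function of each individual selectivity $s_i$ when the remaining selectivities are held fixed, and then iteratively push each coordinate to an endpoint of its interval.

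First I would rewrite the regret as
\[
\ar(p, x) \;=\; \co(p, x) \;-\; \min_{p' \in P} \co(p', x),
\]
since $\co(p_{opt(x)}, x) = \min_{p' \in P}\co(p', x)$ by definition of $p_{opt(x)}$. Next I would inspect the cost formula in Equation~(\ref{eq:cost}): for any fixed plan $p' = \sigma_{\pi(1)}, \dots, \sigma_{\pi(n)}$, a given selectivity $s_i$ appears at most once as a factor in each summand (it appears in the prefix products $\prod_{j<k} s_{\pi(j)}$ for every position $k$ past $i$'s position in $\pi$, and never squared). Hence, with the other $s_j$ held fixed, $\co(p', x)$ is an affine function of $s_i$. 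Consequently $\co(p, x)$ is affine in $s_i$, and $\min_{p' \in P}\co(p', x)$, as a pointwise minimum of finitely many affine functions, is concave in $s_i$. Subtracting a concave function from an affine function yields a convex function, so $\ar(p, \cdot)$ is convex in each coordinate separately.

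Since a convex function on the compact interval $[\ul{s}_i, \ol{s}_i]$ attains its maximum at one of the endpoints, I would then run the following coordinate-by-coordinate argument. Let $x^\ast \in X$ be any scenario achieving $\max_{x \in X}\ar(p, x)$. For $i = 1, 2, \dots, n$ in turn, I would replace the $i$-th coordinate of $x^\ast$ by whichever of $\ul{s}_i, \ol{s}_i$ gives a regret value at least as large as the current one; convexity in that coordinate guarantees such a choice exists. After all $n$ replacements, the scenario is extreme and its regret is at least $\max_{x \in X}\ar(p, x)$, hence equal to it, proving the theorem.

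The only real subtlety is the multilinearity claim, and it is entirely routine: it just requires reading Equation~(\ref{eq:cost}) and observing that each $s_i$ occurs at most to the first power in every summand. The convexity-in-each-coordinate consequence and the endpoint-maximum iteration are then immediate, so I do not expect a substantive obstacle in this proof.
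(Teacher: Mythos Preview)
Your proposal is correct and follows essentially the same approach as the paper: both establish that the regret is convex in each selectivity coordinate (the paper via an explicit piecewise-linear analysis of $\co(p_{opt(x)},x)$ as $s_m$ varies, you via the cleaner observation that $\min_{p'\in P}\co(p',x)$ is a pointwise minimum of finitely many affines) and then conclude that the maximum over the box lies at a vertex. Your coordinate-by-coordinate pushing argument is a more self-contained replacement for the paper's appeal to Rockafellar's Corollary~32.3.4, and in fact sidesteps a subtlety there, since the paper only verifies coordinate-wise convexity rather than the joint convexity that corollary requires.
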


\begin{proof}
	We introduce the following notation to show that our cost formulas are
	piecewise linear functions:
	\begin{eqnarray*}
		L_x^{\pi}(y) & := & \sum_{i=1,i\not=x}^{y} 
		\left( \prod_{j=1,j\not=x}^{i} s_{\pi(j)} \right) \\
		R_x^{\pi}(y) & := & \sum_{i=y,i\not=x}^{n-1} 
		\left( \prod_{j=1,j\not=x}^{i} s_{\pi(j)} \right) 
	\end{eqnarray*}
	where $L_x^{\pi}(y)$ computes the cost of plan $p$ with
	the operator permutation $\pi$ up to the operator
	at position $y$. We skip the operator at position
	$x$, i.e., the summand in which $s_{\pi(x)}$ appears
	first is left out of the sum and $s_{\pi(x)}$ 
	is omitted in all products. Analogously, we
	define $R_x^{\pi}(y)$ which computes the cost
	to the end of the plan starting from position $y$. If we do not
	want to skip any operators, we simply write
	$L^{\pi}(y)$ or $R^{\pi}(y)$.

	Expressing the costs of $p$ and $p_{opt(x)}$
	as a function of $s_m$:
	\begin{eqnarray*}
		\co(p,x,s_m) & = & L^{\pi}(v-1) + s_m R_v^{\pi}(v-1) \\
		\co(p_{opt(x)},x,s_m) & = & L^{\pi_{opt(x)}}(w-1) + s_m R_w^{\pi_{opt(x)}}(w-1) 
	\end{eqnarray*}
	we see that $\co(p,x)$ is a linear function in $s_m$. 
	$\co(p_{opt(x)},x,s_m)$ is linear as long as 
	$s_{\pi_{opt(x)}}(w-1) \leq s_m \leq s_{\pi_{opt(x)}}(w+1)$.
	If $s_m$ leaves this range, then $p_{opt(x)}$ will change,
	as all operators are sorted in ascending order of their
	selectivity. Nevertheless, $\co(p_{opt(x)},x,s_m)$ is a
	piecewise linear function. 
	Clearly, we can swap the positions of 
	two operators in an optimal plan without changing its
	optimality if the operators have exactly the same selectivity.
	So if $s_m = s_{\pi_{opt(x)}}(w-1) = \dots = s_{\pi_{opt(x)}}(w-k)$,
	then we can swap $\sigma_m$ with $\sigma_{w-k}$.
	Analogously, if $s_m = s_{\pi_{opt(x)}}(w+1) = \dots = s_{\pi_{opt(x)}}(w+k)$,
	then we can swap $\sigma_m$ with $\sigma_{w+k}$. For
	the cost of the optimal plan, this means
	$\co(p_{opt(x)},x,s_m)$
	\begin{eqnarray*}
		& = & \left \{
		\begin{array}{l@{\quad}l}
			\vdots & \\
			L^{\pi_{opt(x)}}(w-k-1) + s_m R_w^{\pi_{opt(x)}}(w-k-1) & \\
			\quad\quad\quad\mbox{if }
			s_{\pi_{opt(x)}}(w-k-1) \leq s_m \leq s_{\pi_{opt(x)}}(w-k) \\
			L^{\pi_{opt(x)}}(w-1) + s_m R_w^{\pi_{opt(x)}}(w-1) & \\
			\quad\quad\quad\mbox{if }
			s_{\pi_{opt(x)}}(w-1) \leq s_m \leq s_{\pi_{opt(x)}}(w+1) \\
			L_w^{\pi_{opt(x)}}(w+k) + s_m R_w^{\pi_{opt(x)}}(w+k) & \\
			\quad\quad\quad\mbox{if }
			s_{\pi_{opt(x)}}(w+k) \leq s_m \leq s_{\pi_{opt(x)}}(w+k+1) \\
			\vdots & \\
		\end{array}
		\right.
	\end{eqnarray*}
	
	\noindent
	Figure~\ref{fig:linear} illustrates the cost functions for 
	$p$ and $p_{opt(x)}$.
	
	\begin{figure}
		\begin{center}
			\mbox{\includegraphics[width=0.8\textwidth]{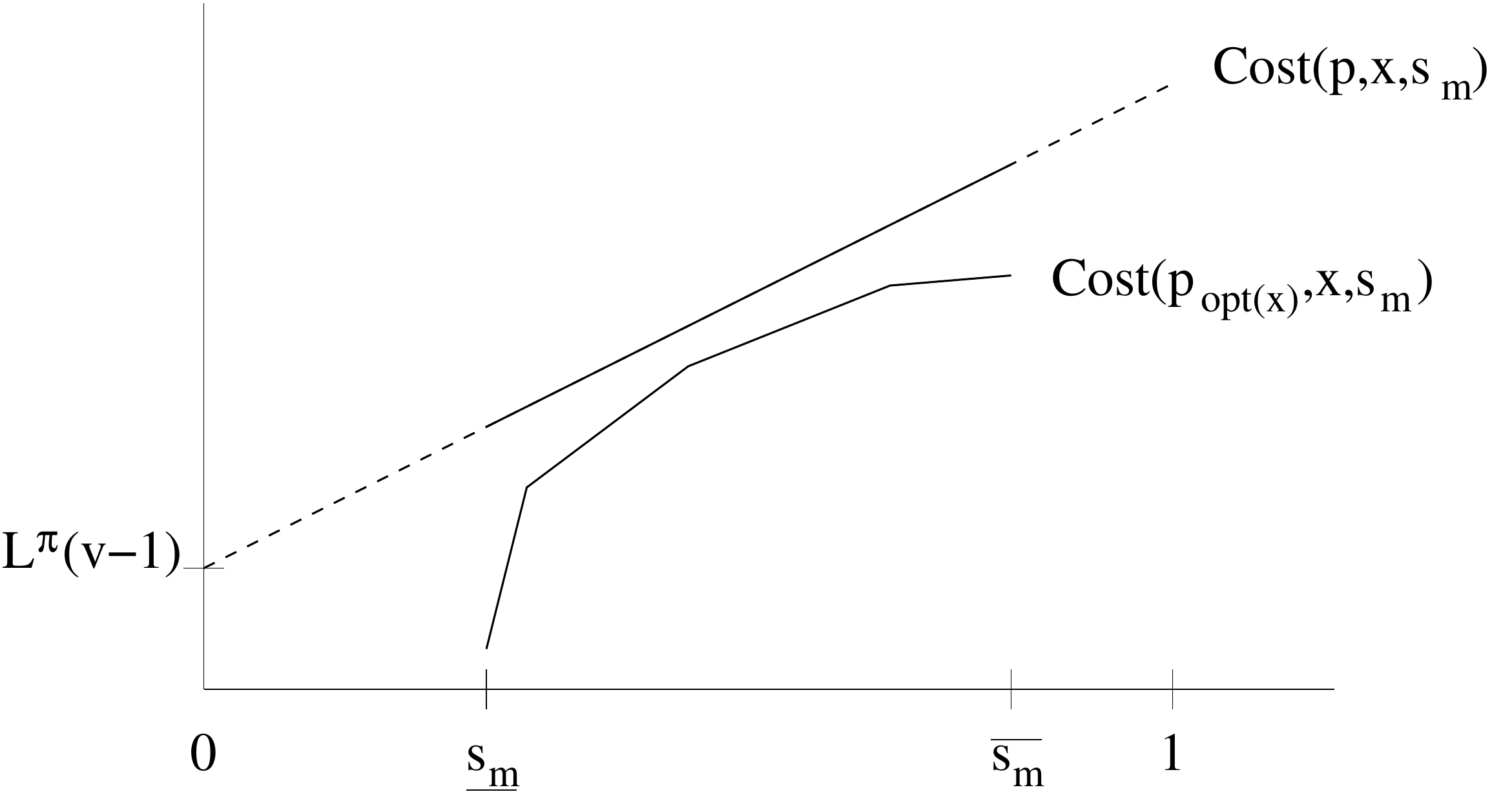}}
		\end{center}
		\caption{Visualisation of $\co(p,x,s_m)$ and $\co(p_{opt(x)},x,s_m)$}
		\label{fig:linear}
	\end{figure}
	
	We show that $\co(p_{opt(x)},x,s_m)$ is a concave (or convex
	upwards) function. For our piecewise linear function this
	means proving that by increasing $s_m$ (moving into the
	next piece) the slope will never increase, while if we
	decrease $s_m$, the slope will never decrease.

	Increasing $s_m$ will change the slope of
	$\co(p_{opt(x)},x,s_m)$ from $R_w^{\pi_{opt(x)}}(w-1)$ to
	$R_w^{\pi_{opt(x)}}(w+k)$. $R_w^{\pi_{opt(x)}}(w+k)$ is less or
	equal than $R_w^{\pi_{opt(x)}}(w-1)$, as they are identical
	except for the additional summands $w-1$ to $w+k-1$ in
	$R_w^{\pi_{opt(x)}}(w-1)$. Analogously, decreasing
	$s_m$ will change the slope from
	$R_w^{\pi_{opt(x)}}(w-1)$ to $R_w^{\pi_{opt(x)}}(w-k-1)$ (which
	is greater or equal).

	So $\ar(p,x) = (\co(p,x) - \co(p_{opt(x)},x))$ is a convex
	function, whose domain is restricted to a
	polyhedral convex set, defined by the lower and upper bounds of the
	selectivities. The global maximum of such a function is always found at one of
	the extreme points of the polyhedral convex set (Corollary 32.3.4 
	in~\cite{Rock70}).
\end{proof}

\subsection{Domination}

We can determine the relative order two operators have to be in
to minimise the maximal regret if one operator dominates the other.

\begin{theorem}
	\label{th:dominate}
	If $\sigma_a$ dominates $\sigma_b$, then there exists a plan
	$p$ minimising the maximal regret in which $\sigma_a$ precedes $\sigma_b$.
\end{theorem}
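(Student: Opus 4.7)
The plan is an exchange argument. Let $p$ be any MRO plan; if $\sigma_a$ already precedes $\sigma_b$ in $p$ there is nothing to show, so assume $\sigma_b$ sits at position $i$ and $\sigma_a$ at position $j>i$. Define $p'$ by swapping the positions of $\sigma_a$ and $\sigma_b$ while leaving every other operator in place. I will show that $\max_x \ar(p',x) \leq \max_x \ar(p,x)$, which makes $p'$ itself MRO. By Theorem~\ref{th:extreme}, the worst scenario for $p'$ can be taken to be an extreme scenario $x^*$, so it suffices to produce a feasible scenario $x'$ (for $p$) with $\ar(p,x') \geq \ar(p',x^*)$. I split on the sign of $s_b^* - s_a^*$.

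If $s_a^* \leq s_b^*$, I set $x'=x^*$. Using the notation from the proof of Theorem~\ref{th:extreme}, both $\co(p,x^*)$ and $\co(p',x^*)$ share the prefix cost up to position $i-1$ and the common factor $B=\prod_{l<i} s^*_{\pi(l)}$, so the difference $\co(p,x^*)-\co(p',x^*)$ depends only on the interaction of $\sigma_a$ and $\sigma_b$ with the operators strictly between positions $i$ and $j$ and with the tail. A careful factoring isolates $(s_b^*-s_a^*)\geq 0$ as the sign-determining term, yielding $\co(p,x^*) \geq \co(p',x^*)$ and therefore $\ar(p,x^*) \geq \ar(p',x^*)$ since $p_{opt(x^*)}$ is common to both plans.

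If instead $s_a^* > s_b^*$, the dominance hypothesis $\ul{s}_a \leq \ul{s}_b$ and $\ol{s}_a \leq \ol{s}_b$ combined with $s_a^* > s_b^*$ yields $s_b^* \in [\ul{s}_a,\ol{s}_a]$ and $s_a^* \in [\ul{s}_b,\ol{s}_b]$, so the scenario $x'$ obtained from $x^*$ by interchanging the $a$- and $b$-coordinates ($s_a' := s_b^*$, $s_b' := s_a^*$) is feasible. I will then argue that this double swap --- exchanging operator positions to turn $p$ into $p'$ and exchanging the selectivity values at those two coordinates to turn $x'$ into $x^*$ --- leaves every term of the cost formula invariant, so that $\co(p,x')=\co(p',x^*)$; the same invariance applied to the respective optimal plans gives $\co(p_{opt(x')},x')=\co(p_{opt(x^*)},x^*)$, and hence $\ar(p,x')=\ar(p',x^*)$. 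Combining the two cases proves $\max_x \ar(p',x) \leq \max_x \ar(p,x)$, as required.

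I expect the main obstacle to be the algebraic bookkeeping in Case~1 when $\sigma_a$ and $\sigma_b$ are not adjacent in $p$: the intermediate operators' selectivities and costs appear in several nested products, and they must be grouped so that the entire difference $\co(p,x^*)-\co(p',x^*)$ collapses to a non-negative multiple of $(s_b^*-s_a^*)$. The same careful factorisation underlies the swap-invariance appealed to in Case~2, so once Case~1 is handled cleanly the identity $\co(p,x')=\co(p',x^*)$ needed there should follow from the same expansion applied to the swapped inputs.
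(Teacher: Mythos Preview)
Your argument is correct and follows essentially the same exchange strategy as the paper: swap $\sigma_a$ and $\sigma_b$, compute $\co(p,x)-\co(p',x)=(s_b-s_a)\cdot(\text{nonneg})$ for the non-adjacent case, and then handle the lone problematic extreme configuration $(s_a,s_b)=(\ol{s}_a,\ul{s}_b)$ via the symmetry between swapping operator positions and swapping the two selectivity values. Your Case~2 is in fact a cleaner packaging of the paper's idea: the paper passes through an intermediate scenario $y''$ with $s_a=s_b$, argues that moving $p$ toward $(s_a,s_b)=(\ul{s}_b,\ol{s}_a)$ mirrors moving $p'$ toward $(\ol{s}_a,\ul{s}_b)$, and then invokes Theorem~\ref{th:extreme} to push further to an extreme scenario for $p$; you observe directly that the swapped scenario $x'$ is feasible (since domination plus $\ol{s}_a>\ul{s}_b$ forces $\ul{s}_b\in[\ul{s}_a,\ol{s}_a]$ and $\ol{s}_a\in[\ul{s}_b,\ol{s}_b]$) and that the double swap gives $\ar(p,x')=\ar(p',x^*)$ exactly, so no detour or extremality argument is needed. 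The ``algebraic bookkeeping'' you flag as the main obstacle is precisely the paper's displayed identity $\co(p',x)-\co(p,x)=(s_a-s_b)\sum_{i=w}^{w+k-1}\prod_{j=1,j\neq w}^{i}s_{\pi(j)}$, which under the paper's unit-cost assumption is a routine telescoping; note that both your Case~2 invariance (for $\co$ and for the optimal cost, which depends only on the multiset of selectivities) and the paper's symmetry argument rely on this assumption.
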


\begin{proof}
	Assume that $p$ is a plan minimising the maximal
	regret in which 
	$\sigma_b$ precedes $\sigma_a$: 
	$\pi(w)=b$ and $\pi(w+k)=a$.
	Furthermore, assume that $p'$ is constructed
	from $p$ by swapping $\sigma_b$ and $\sigma_a$: 
	$\pi'(w)=a$ and $\pi'(w+k)=b$. All the other
	operators are in exactly the same order as
	in $p$. We assume that $p'$ does {\em not\/} 
	minimise the maximal regret.

	Let us investigate the difference 
	in regret between $p'$ and $p$
	for any given scenario $x$.
	Since the optimal plan
	is the same for both regrets
	\begin{eqnarray*}
		\co(p',x) - \co(p,x) = (s_a - s_b) \sum_{i=w}^{w+k-1} 
		\left( \prod_{j=1, j\not=w}^{i} s_{\pi(j)} \right)
	\end{eqnarray*}
	\noindent
	we only need to check what happens between
	positions $w$ and $w+k-1$, as
	$L^{\pi}(w-1)$ and $R^{\pi}(w+k)$ (see the
	proof of Theorem~\ref{th:extreme} for the
	meaning of this notation) are identical
	for both $p$ and $p'$.

	Because this holds for every scenario, it also holds for
	the worst-case scenario $y'$ of $p'$. 
	Let us assume first that the selectivities for 
	$\sigma_a$ and $\sigma_b$ in $y'$ are either
	$(\ul{s_a},\ul{s_b})$, $(\ul{s_a},\ol{s_b})$,
	or $(\ol{s_a},\ol{s_b})$. Since $\sigma_a$
	dominates $\sigma_b$, we know that
	$\ul{s_a} \leq \ul{s_b}$, $\ul{s_a} \leq \ol{s_b}$,
	and $\ol{s_a} \leq \ol{s_b}$. Therefore,
	\[
	(s_a - s_b) \sum_{i=w}^{w+k-1} 
	\left( \prod_{j=1, j\not=w}^{i} s_{\pi(j)} \right) 
	\leq 0
	\]
	which means that the
	maximal regret of $p'$ cannot be greater than
	that of $p$. This also holds for strict domination,
	i.e., when $\ol{s_a} \leq \ul{s_b}$. However, this is a contradiction to
	our assumption. Thus, for the worst case scenario
	$y'$, we must have the selectivities $(\ol{s_a}, \ul{s_b})$
	with $\ol{s_a} > \ul{s_b}$.
	
	Let us look at a scenario $y''$ which is identical to 
	$y'$ except for $s_a = s_b = s_{c}$
	with $\ol{s_a} > s_{c} > \ul{s_b}$ (see Figure~\ref{fig:sasb}). From 
	Theorem~\ref{th:extreme} we know that the regret can
	be increased by moving to an extreme scenario.
	In this case $s_a$ has to be increased from $s_{c}$
	to $\ol{s_a}$ and $s_b$ has to be decreased from $s_{c}$
	to $\ul{s_b}$ to reach the maximal regret $\ar(p',y')$.
	
	\begin{figure}
		\begin{center}
			\mbox{\includegraphics[width=0.7\textwidth]{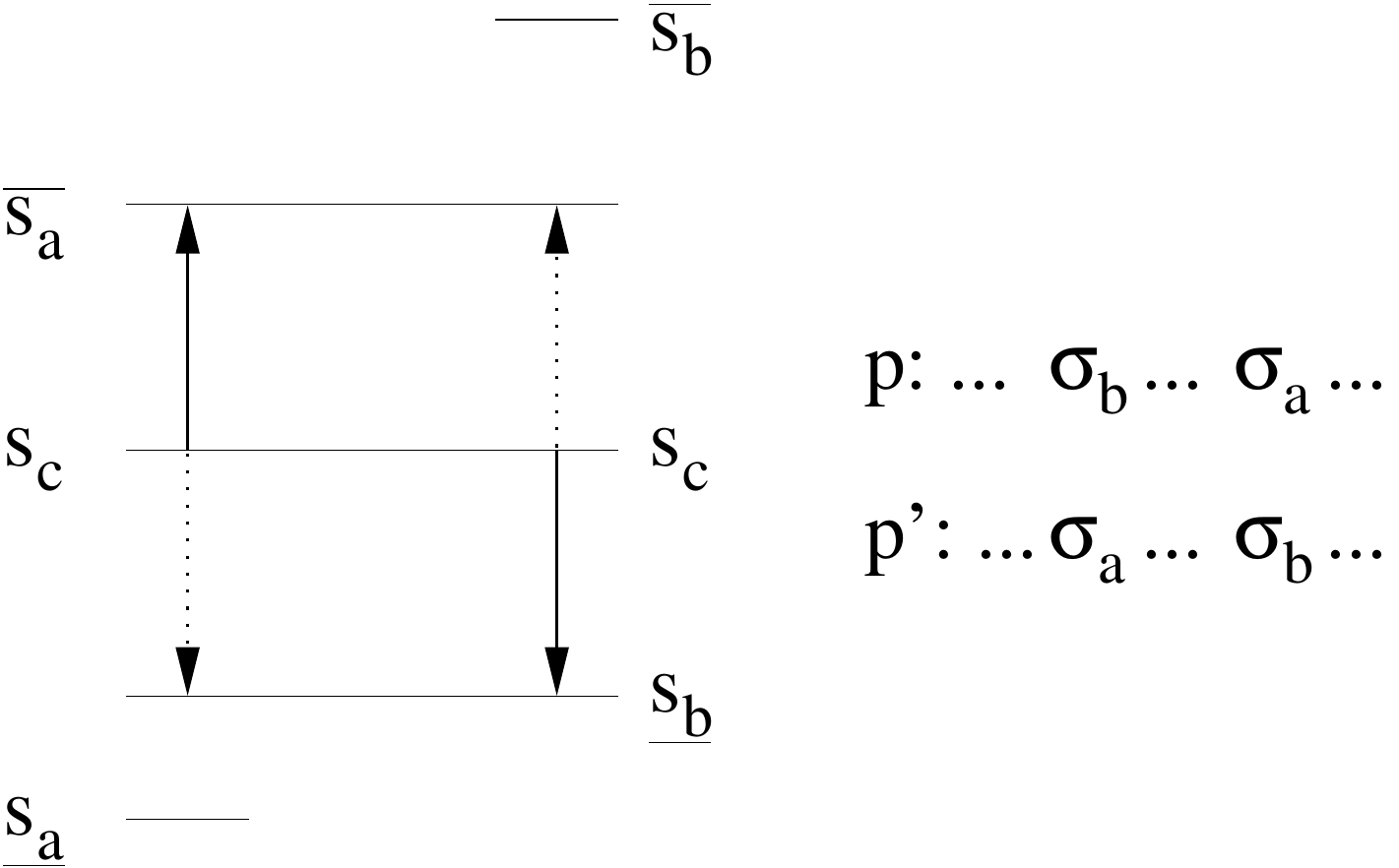}}
		\end{center}
		\caption{Visualisation for scenario $y''$}
		\label{fig:sasb}
	\end{figure}

	Clearly, $\ar(p,y'') = \ar(p',y'') $. 
	The following is illustrated in Figure~\ref{fig:sasb}.
	Increasing $s_b$
	from $s_{c}$ to $\ol{s_a}$ and decreasing $s_a$
	from $s_{c}$ to $\ul{s_b}$ for $p$ under scenario
	$y''$ (dotted arrows) will have exactly the same effect as
	increasing $s_a$ and decreasing $s_b$ for $p'$ under
	scenario $y''$ (solid arrows).
	However, this may not be an extreme
	case scenario for $p$ yet. Further increasing
	$s_b$ to $\ol{s_b}$ and decreasing $s_a$ to $\ul{s_a}$
	can never decrease the regret (according to 
	Theorem~\ref{th:extreme}). But that means we have
	found a scenario for $p$ which has at least the
	same regret as the worst-case scenario for
	$p'$, which contradicts our assumption.
\end{proof}

\begin{example}\label{ex:dominate}
	Recall from Example~\ref{ex:regret} the set $S = \{\sigma_1, \sigma_2, \sigma_3\}$ of selection operators,
	with selectivities $s_1=[.2, .8]$, $s_2=[.3, .5]$ and $s_3=[.1, .4]$.  Because $\sigma_3$ dominates
	$\sigma_1$ and $\sigma_2$, in the minmax regret solution, i.e.\ plan $\sigma_3 \sigma_1 \sigma_2$,
	$\sigma_3$ precedes $\sigma_1$ and $\sigma_2$.  As a result of domination, in this example we would 
	only have to consider two plans when searching for the minmax regret solution.
	\hfill$\Diamond$
\end{example}

\subsection{Midpoints of Intervals}

For the TFT problem, Kasperski used the simple heuristic of sorting jobs
in non-decreasing order according to the
midpoints of their intervals, yielding
a 2-approx\-imation~\cite{Kasperski-book}.
This approach does not guarantee a bound 
for MRO selection ordering; as shown below,
the quality of the solution can become
arbitrarily bad.

Given $2n+1$ operators, the first $n$ operators have
the selectivities $\ul{s_i} = 0$ and 
$\ol{s_i} = 1$ ($1 \leq i \leq n$),
while the next $n$ operators 
have the selectivities $\ul{s_i} = \ol{s_i} = 0.5 + \epsilon$
($n+1 \leq i \leq 2n$) for some small $\epsilon$.
The final operator has a constant selectivity
of $1$ to guarantee that it will always be 
in last position, meaning that its selectivity
will not impact any further steps.

The midpoint heuristic will order the operators
in exactly this way, from 1 to $2n+1$. Clearly,
the worst-case scenario for this plan is when
$s_i$ is set to $1$ for $1 \leq s_i \leq n$. 
In the optimal plan for this scenario, the
operators $\sigma_i$ with $n+1 \leq i \leq 2n$
will be executed first.

The regret of this plan is computed as follows:

\begin{tabular}{ccccccccccc}
	& $1$ & \hspace*{-.3cm} $+$ \hspace*{-.3cm} & 
	$1^2$ & \dots & \hspace*{-.3cm} $+$ \hspace*{-.3cm} & 
	$1^n$ & \hspace*{-.3cm} $+$ \hspace*{-.3cm} & 
	$f(n)$ \\
	$-$ &
	$(0.5 + \epsilon)$ & \hspace*{-.3cm} $-$ \hspace*{-.3cm} 
	& $(0.5 + \epsilon)^2$ & 
	\dots & \hspace*{-.3cm} $-$ \hspace*{-.3cm} 
	& $(0.5 + \epsilon)^n$ & \hspace*{-.3cm} $-$ \hspace*{-.3cm}  
	& $g(n)$ 
\end{tabular}

\noindent
where $f(n)$ and $g(n)$ stand for the cost of the remaining
operators in the plan. A lower bound for this expression is
the following, since $f(n) \geq g(n)$ (see 
Lemma~\ref{lem:factor} below):
\begin{eqnarray*}
	n - n (0.5 + \epsilon)
\end{eqnarray*}
With increasing $n$ and small values for $\epsilon$, 
this expression can get arbitrarily large.

\begin{lemma}
	\label{lem:factor}
	Given a query plan $p$ and a scenario $x$, 
	we have the following relationship between the
	summands in $\co(p,x)$ and $\co(p_{opt(x)},x)$, where
	$p_{opt(x)}$ is the optimal plan for scenario $x$:
	\begin{eqnarray*}
		\prod_{j=1}^{k} s_{\pi(j)} \geq 
		\prod_{j=1}^{k} s_{\pi_{opt(x)}(j)} \mbox{ for all } k \mbox{ with }
		1 \leq k \leq n-1 
	\end{eqnarray*}
\end{lemma}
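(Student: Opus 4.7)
The plan is to reduce the claim to a statement about the $k$ smallest selectivities among $s_1,\dots,s_n$ and then invoke a standard rearrangement argument. Specifically, I would first argue that the first $k$ operators appearing in $p_{opt(x)}$ are precisely the $k$ operators with the smallest selectivities in the scenario $x$, and then show that for any other ordering $p$, the product of the first $k$ selectivities of $p$ is at least the product of these $k$ smallest values.

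For the first step, I would invoke the KBZ ordering principle from \cite{KBZ86}, which states that the optimal plan sorts operators in non-decreasing order of the rank $r_i = (s_i-1)/c_i$ from Equation~(\ref{eq:rank}). In the setting implicitly in force in the surrounding analysis (unit per-tuple costs, as used in the midpoint-heuristic counterexample that this lemma is invoked to justify), rank is monotone in selectivity, so $p_{opt(x)}$ sorts the operators by non-decreasing selectivity, and its first $k$ positions therefore hold the $k$ operators with the smallest $s_i$. I would note this assumption explicitly at the start of the proof, since with arbitrary costs the claim as written need not hold.

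For the second step, I would use an exchange argument. Let $A=\{\pi(1),\dots,\pi(k)\}$ be the indices occupying the first $k$ positions of $p$, and let $B$ be the set of indices of the $k$ smallest selectivities, i.e.\ the first $k$ positions of $p_{opt(x)}$. If $A\neq B$, pick any $a\in A\setminus B$ and $b\in B\setminus A$; then $s_b\le s_a$ by the defining property of $B$. Replacing $a$ with $b$ in $A$ multiplies $\prod_{i\in A} s_i$ by the factor $s_b/s_a\in[0,1]$ (when $s_a>0$), so the product does not increase. Iterating until $A=B$ yields $\prod_{j=1}^{k} s_{\pi(j)}\ge \prod_{j=1}^{k} s_{\pi_{opt(x)}(j)}$, which is the desired inequality.

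The main subtlety, rather than a genuine obstacle, is the edge case $s_a=0$, where the multiplicative step is degenerate. This is benign: if any $s_a=0$ occurs among $A$ then the left-hand side already equals $0$, and since $0$ is the minimum possible selectivity, it must also appear among the $k$ smallest in $B$, forcing the right-hand side to be $0$ as well. The argument is otherwise purely combinatorial and relies only on the characterisation of $p_{opt(x)}$ together with $s_i\in[0,1]$; it does not require any further properties of the cost formula~(\ref{eq:cost}).
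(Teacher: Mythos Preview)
Your argument is correct and matches the paper's approach: both rest on the fact that (under unit costs, which the paper's own proof also tacitly uses) $p_{opt(x)}$ places the $k$ smallest selectivities in the first $k$ positions, and then compare the index set $\{\pi(1),\dots,\pi(k)\}$ against $\{\pi_{opt(x)}(1),\dots,\pi_{opt(x)}(k)\}$. The only cosmetic differences are that the paper casts the comparison as a contradiction after cancelling common factors rather than as a direct exchange, and it dispatches the zero case up front by observing that any $s_m=0$ makes the right-hand product vanish for every $k$.
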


\begin{proof}
	If there exists an $s_m = 0$, then $\prod_{j=1}^{k} s_{\pi_{opt(x)}(j)} = 0$
	for all $k$ and the above holds, as $s_i \geq 0$ for all $i$.
	This is due to $s_{\pi_{opt(x)}(1)}=s_m=0$ 
	($p_{opt(x)}$ sorts the selections
	in non-decreasing order of their selectivities according
	to the ranking algorithm). 
	Thus, in the following all $s_i > 0$.

	We assume there is a $k$ for
	which $\prod_{j=1}^{k} s_{\pi(j)} < \prod_{j=1}^{k} s_{\pi_{opt(x)}(j)}$
	(proof by contradiction).
	Let $\pi^k = \{\pi(i) \mid 1 \leq i \leq k \}$ be the set of indexes
	of the first $k$ selection operators in $p$ and
	$\pi_{opt(x)}^k = \{\pi_{opt(x)}(i) \mid 1 \leq i \leq k \}$ the set
	of indexes of the first $k$ selection operators in $p_{opt(x)}$.
	If $\pi^k = \pi_{opt(x)}^k$, then the two products are
	equal, which is a contradiction to our assumption.
	So in the following we assume $\pi^k \not= \pi_{opt(x)}^k$.

	Nevertheless, the intersection between
	$\pi^k$ and $\pi_{opt(x)}^k$ may be non-empty. In this
	case we can discard all the selectivities common
	to both products:
	\begin{eqnarray*}
		\prod_{j \in \pi^k \cap \pi_{opt(x)}^k} s_j 
		\prod_{j \in \pi^k \setminus \pi_{opt(x)}^k} s_j 
		& < &
		\prod_{j \in \pi^k \cap \pi_{opt(x)}^k} s_j
		\prod_{j \in \pi_{opt(x)}^k \setminus \pi^k} s_j
		\\
		\Leftrightarrow \;\;\; 
		\prod_{j \in \pi^k \setminus \pi_{opt(x)}^k} s_j
		& < &
		\prod_{j \in \pi_{opt(x)}^k \setminus \pi^k} s_j
	\end{eqnarray*}
	\noindent
	Therefore, there exists $i \in \pi^k \setminus \pi_{opt(x)}^k$
	such that $s_i < \max( s_l \mid l \in \pi_{opt(x)}^k \setminus \pi^k)$
	(or the inequality would not hold). 
	However, that means $\sigma_i$ has appeared in
	$p$ but not yet in $p_{opt(x)}$. This is a contradiction:
	the selection operators are sorted in non-decreasing order
	in $p_{opt(x)}$ and $\sigma_i$ should have appeared in
	$p_{opt(x)}$ before the selection operator with selectivity
	$\max( s_l \mid l \in \pi_{opt(x)}^k \setminus \pi^k)$.
\end{proof}

\section{Hardness of MRO}
\label{sec:hardness}

\newcommand{\regret}{{\sc minmax regret}}
\newcommand{\cover}{{\sc set cover}}
\newcommand{\rcover}{{\sc restricted set cover}}
\newcommand{\covert}{{\sc exact cover by 3-sets}}

In this section, we show that the decision problem for general $MRO(S)$, 
which we call \regret, is NP-hard.  In this version, we are 
given a set $S = \{\sigma_1, \sigma_2, \dots, \sigma_n\}$
of selection operators, with each operator $\sigma_i$ assumed
to have unit cost.  We are also given a set $X = \{X_1, X_2, \ldots, X_m \}$
of scenarios, where each scenario $X_j$ specifies a selectivity $s_{ij}$
for each operator $\sigma_i$, $1 \leq j \leq m$ and $1 \leq i \leq n$.

To simplify the notation, let us identify a plan $p$
with the permutation $\pi$ it defines, and from now on
use $\pi(i)$ to denote the index of the operator appearing
in position $i$ in plan $\pi$.

Below we define the decision problem \regret\ as well as
the well-known NP-complete problems \cover\ and \covert.

\vspace{1ex}\noindent
\regret:
given a set $S$ of $n$ selection operators, a set $X$ of $m$ scenarios,
and a real number $R$, is there a plan whose maximum regret is less than $R$?

\vspace{1ex}\noindent
\cover:
given a finite set $A$, a collection $T$ of subsets of $A$, and a positive integer $r$,
is there a subset $C = \{ C_1, \ldots, C_r \}$ of $T$ such that
$\bigcup_{C_i \in C} C_i = A$, that is, such that $C$ covers $A$?

\vspace{1ex}\noindent
\covert:
given a finite set $A$ with $|A|=3q$ and a collection $T$ of 3-element subsets of $A$,
is there a subset $C$ of $T$ such that each element of $A$ occurs in exactly one
member of $C$?

\vspace{1ex}
It is known that a restriction of \covert\
which requires that each element of the set $A$ appears in exactly
three subsets of $T$ is NP-complete~\cite{Gonzalez1985}.
Since \cover\ is a generalisation of \covert, we also have
that \rcover, defined below, is NP-complete.

\vspace{1ex}\noindent
\rcover:
given finite set $A$, collection $T$ of subsets of $A$
such that each element of $A$ appears in exactly three subsets
of $T$, and positive integer $r$,
is there a subset $C = \{ C_1, \ldots, C_r \}$ of $T$ such that
$C$ covers $A$?

\vspace{1ex}
We show that \regret\ is NP-hard by reducing \rcover\ to it.

\begin{theorem}
	\regret\ is NP-hard.
\end{theorem}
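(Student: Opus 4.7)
The plan is to reduce \rcover\ to \regret\ in polynomial time. Given an instance $(A, T, r)$ of \rcover, I would build the following \regret\ instance: introduce one unit-cost operator $\sigma_C$ for each subset $C \in T$, and for each element $a \in A$ introduce one scenario $x_a$ which assigns selectivity $s_{C,a} = 0$ to every operator $\sigma_C$ with $a \in C$, and selectivity $s_{C,a} = 1$ to every other operator. Finally, set the regret threshold to $R = r$.

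The crux of the argument is a clean computation of the regret in each scenario $x_a$. Because all costs are $1$ and every selectivity in $x_a$ is either $0$ or $1$, if the first operator whose subset contains $a$ occupies position $j_a$ in a plan $\pi$, then the cost formula
\[
\co(\pi, x_a) = 1 + s_{\pi(1)} + s_{\pi(1)} s_{\pi(2)} + \cdots + s_{\pi(1)} \cdots s_{\pi(n-1)}
\]
collapses to exactly $j_a$: the first $j_a$ summands are $1$ and every later summand carries the factor $s_{\pi(j_a)} = 0$. The optimal plan for $x_a$ places any useful operator first and so attains cost $1$, hence $\ar(\pi, x_a) = j_a - 1$. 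Thus the maximum regret of $\pi$ equals $\max_{a \in A} (j_a - 1)$, which is strictly less than $r$ exactly when every element of $A$ is covered by one of the first $r$ operators of $\pi$.

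From here both directions of the equivalence are immediate. If $C \subseteq T$ is a cover of size $r$, I place the operators $\{\sigma_C : C \in C\}$ in the first $r$ positions (in any order), obtaining $j_a \leq r$ for all $a$, so the maximum regret is at most $r - 1 < R$. Conversely, if no cover of size $r$ exists, then for every plan the set of subsets corresponding to its first $r$ operators fails to cover some $a \in A$, forcing $j_a \geq r + 1$ and a regret of at least $r = R$ in scenario $x_a$. The construction uses $|T|$ operators and $|A|$ scenarios of size $|T|$, so the reduction is polynomial.

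I do not anticipate a substantial obstacle: the main insight is that setting selectivities to exactly $0$ and $1$ reduces the cost under $x_a$ to a simple counting quantity (the position of the first useful operator), which exposes the combinatorial link to set cover. The only point requiring a little care is making sure the strict inequality in the definition of \regret\ and the separation $r-1 < r$ line up correctly, which they do with $R = r$.
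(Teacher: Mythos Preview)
Your proof is correct and follows the same overall strategy as the paper's: reduce from \rcover\ by mapping subsets to operators and elements to scenarios, with a low selectivity marking membership and selectivity $1$ otherwise. The one substantive difference is your choice of $0$ rather than the paper's $1/(n+1)$ for the low value, and this is a genuine simplification. With selectivity $0$, the optimal cost in every scenario collapses to exactly $1$ and the regret of a plan in scenario $x_a$ is precisely $j_a-1$, so the equivalence with ``the first $r$ operators form a cover'' is immediate. The paper's choice of $1/(n+1)$ means the optimal cost would in general depend on how many subsets contain the element; this is why the paper must work with the \emph{restricted} variant (each element in exactly three subsets), ensuring the optimal cost $p$ is constant across scenarios so that minimising maximum regret coincides with minimising maximum cost. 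Your argument never actually uses the restriction and would go through verbatim from ordinary \cover, avoiding that detour entirely.
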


\begin{proof}
	We reduce \rcover\ to \regret.
	Given an instance of \rcover\ represented by $A$,
	$T$ and $r$, we construct an instance of \regret\ as follows.
	Let $|A|=m$ and $|T|=n$.
	Each subset $C_j$ in $T$ is represented by an operator $\sigma_j$ in $S$,
	and each element $a_i \in A$ is represented by a scenario $X_i \in X$
	such that the selectivity for operator $\sigma_j$ in $X_i$, that is,
	$s_{ij}$ is $1/(n+1)$ if $a_i \in C_j$ and $1$ if $a_i \not\in C_j$.
	Since each element of $A$ appears in exactly three subsets, each
	scenario $X_i \in X$ has three selectivities of $1/(n+1)$ and $n-3$
	selectivities of $1$.  Hence the optimal plan for each scenario has
	the same cost, say, $p$.  We set $R$ to $r - p$ and
	claim that there is a subset of $T$ of size $r$ which covers $A$
	if and only if there is a plan whose maximum regret over all scenarios
	is less than $R$.
	
	Assume there is a subset $C = \{ C_{k_1}, \ldots, C_{k_r} \}$ of $T$
	which covers $A$.  Let $\pi$ be any plan in which $\pi(i) = \sigma_{k_i}$,
	$1 \leq i \leq r$, that is, in which the first $r$ operators correspond
	to subsets in the cover.  Since $C$ is a cover, for no scenario $X_i$
	can it be the case that the selectivity for each of the first
	$r$ operators in $\pi$ is $1$.  At worst, the first $r-1$ operators
	have selectivity $1$, with the $r$'th operator having selectivity
	$1/(n+1)$, and the remaining $n-r$ operators having selectivity $1$.
	The cost of this plan is therefore $r-1 + (n-r+1) / (n+1)$, which
	is always less than $r$.  Hence the regret is less than $R = r-p$,
	where $p$ is the cost of the optimal plan.
	
	Now assume no subset of $T$ of size $r$ covers $A$.  In other words,
	for every subset of size $r$, at least one element of $A$ is not in
	any set in the subset.  Hence, for every plan $\pi$, there must be
	some scenario in which the first $r$ operators have selectivity $1$.
	Finding a plan which minimises the maximum regret is the same as
	finding a plan which minimises the maximum cost since the cost of
	the optimal plan is the same for each scenario.
	Since every plan in this case has cost at least $r$, there is no plan
	whose maximum cost is less than $r$.  Hence there is no plan whose
	maximum regret is less than $R = r-p$, where $p$ is the cost of the optimal plan.
\end{proof}

\section{Some Polynomial-Time Cases} 
\label{sec:specialCases}

In this section we show that, for sets of selection operators $S$
satisfying certain properties, $MRO(S)$ can be found in polynomial time.
In particular, we look at dominant operators, which can easily be ordered
correctly, and their combination with constant operators, i.e., operators for
which we can obtain exact selectivity values.
As before, we assume that the cost of each operator is one.


\subsection{Constant and Dominant Operators}
\label{sec:Constant_and_Dominant_Opers}
Let $S$ be a set of selection operators such that the selectivity of each operator can be estimated accurately (i.e., each selectivity is constant).  Then, as mentioned in Section~\ref{sec:SelectionOrdering}, $MRO(S)$ can be found by sorting the operators in non-decreasing order of their rank given by Equation~(\ref{eq:rank}). Given our assumption that each operator has cost one, finding $MRO(S)$ reduces to sorting the operators in non-decreasing order of their selectivity alone. 

Recall from Section~\ref{sec:basic-defs} the definition of a dominant set $S$ of operators.
Given a dominant set $S$ of operators, it follows from Theorem~\ref{th:dominate} that the 
minmax regret solution is one where the operators are sorted in non-decreasing order according 
to their minimum (or maximum) selectivity value.  (Note that a set of constant operators is a 
special case of a dominant set of operators.)  We therefore have:

\begin{corollary}
	If $S$ is a dominant set of $n$ operators, then $MRO(S)$ can be solved in $O(n \log n)$ time.
\end{corollary}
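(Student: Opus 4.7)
The plan is to reduce the problem to sorting. Since $S$ is dominant, any two operators $\sigma_i, \sigma_j$ are comparable under the dominance relation, so sorting the operators by $\ul{s}_i$ in non-decreasing order (with ties broken by non-decreasing $\ol{s}_i$) yields a permutation $\pi^*$ such that $\sigma_{\pi^*(i)}$ dominates $\sigma_{\pi^*(j)}$ whenever $i<j$. This sort takes $O(n\log n)$ time and will be the bulk of the work.

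It remains to argue that $\pi^*$ is an MRO solution. Given any MRO plan $p$, I would apply pairwise exchanges: whenever a dominating operator $\sigma_a$ sits after a dominated operator $\sigma_b$ in the plan, the swap argument inside the proof of Theorem~\ref{th:dominate} produces another plan that is still MRO but has strictly fewer inversions with respect to the dominance order. After at most $n(n-1)/2$ such swaps the plan coincides with $\pi^*$, so $\pi^*$ itself is MRO.

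The main place where a little care is needed is ties in the dominance relation, that is, operators with identical intervals. Any fixed tie-breaking rule suffices, since swapping two such operators changes neither $\co(p,x)$ nor $\co(p_{opt(x)},x)$ for any scenario $x$, and therefore preserves the regret everywhere. I do not foresee a real obstacle here; the corollary is essentially an immediate consequence of Theorem~\ref{th:dominate} together with the observation that a dominant set admits a total dominance order, which can be produced by standard comparison-based sorting in $O(n \log n)$ time.
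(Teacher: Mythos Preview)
Your proposal is correct and follows essentially the same approach as the paper: the paper simply states (in the sentence immediately preceding the corollary) that, by Theorem~\ref{th:dominate}, the minmax regret solution for a dominant set is obtained by sorting the operators in non-decreasing order of their minimum (or maximum) selectivity, which takes $O(n\log n)$ time. Your bubble-sort/inversion argument and the explicit handling of ties merely spell out in more detail why the sorted order is indeed MRO, but the underlying idea is the same.
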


\subsection{Strictly Dominant Operators with a Constant Operator}
\label{sec:Strictly_Dominant_and_Constants}

When we include nested operators (recall the
definition from Section~\ref{sec:basic-defs}),
the problem becomes much more difficult.  As a step
in the direction of solving the general problem, we consider below the
simple case of a strictly dominant set of operators (also defined in
Section~\ref{sec:basic-defs}) along with
a single constant operator nested within one of the non-constant operators.
If $S$ is a strictly dominant set of operators, then the plan $MRO(S)$ has zero regret under all scenarios.
This is because all operators in $MRO(S)$ will be in the same position as in the
corresponding optimal plan under all scenarios.

Let $S$ be a strictly dominant set which includes a constant operator $\sigma_c$
nested within one of the non-constant operators, say $\sigma_i$.  In this case,
we know how to place the dominant operators relative to each other in $MRO(S)$
but we need to determine the position of $\sigma_c$ in $MRO(S)$.
Since $\ul{s}_i \leq s_c \leq \ol{s}_i$, the constant operator $\sigma_c$
should be either immediately before or immediately after $\sigma_i$ in $MRO(S)$.  Interestingly, the correct position for $\sigma_c$ depends only
on the midpoint of the selectivity $s_i$ of $\sigma_i$.

\begin{proposition}\label{prop:strict-dom-constant}
	Let $S$ be a strictly dominant set of $n$ operators such that
	$MRO(S) = (\sigma_1,\ldots,\sigma_n)$.  Let $\sigma_c$ be an operator
	with constant selectivity $s_c$ such that $\ul{s}_i \leq s_c \leq \ol{s}_i$,
	for some $1 \leq i \leq n$, and $S' = S \cup \{ s_c \}$.  In $MRO(S')$,
	$\sigma_c$ is placed between
	(1)~$\sigma_{i-1}$ and $\sigma_i$ if $s_c \leq (\ul{s}_i + \ol{s}_i)/2$, or
	(2)~$\sigma_i$ and $\sigma_{i+1}$ if $s_c \geq (\ul{s}_i + \ol{s}_i)/2$.
	
\end{proposition}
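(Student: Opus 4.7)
My plan is to use Theorem~\ref{th:dominate} to cut the candidates for $MRO(S')$ down to just two plans, and then compare their maximum regrets directly. Since $S$ is strictly dominant and $\ul{s}_i \leq s_c \leq \ol{s}_i$, every $\sigma_j$ with $j<i$ satisfies $\ol{s}_j \leq \ul{s}_i \leq s_c$ and so dominates the constant operator $\sigma_c$, while every $\sigma_j$ with $j>i$ satisfies $s_c \leq \ol{s}_i \leq \ul{s}_j$ and so is dominated by $\sigma_c$. By Theorem~\ref{th:dominate}, in some $MRO(S')$ the operator $\sigma_c$ appears after all $\sigma_j$ with $j<i$ and before all $\sigma_j$ with $j>i$, forcing $\sigma_c$ to sit immediately adjacent to $\sigma_i$. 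This leaves only the two candidates $p_B = \sigma_1 \cdots \sigma_{i-1} \sigma_c \sigma_i \sigma_{i+1} \cdots \sigma_n$ and $p_A = \sigma_1 \cdots \sigma_{i-1} \sigma_i \sigma_c \sigma_{i+1} \cdots \sigma_n$, and the proposition reduces to deciding which has the smaller maximum regret.

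The next step is to establish a clean cost-difference formula. Writing $P_{i-1}(x) = \prod_{j=1}^{i-1} s_j$ for a scenario $x$, the two candidates differ only by swapping the adjacent pair $(\sigma_c, \sigma_i)$. The first $i-1$ summands in Equation~(\ref{eq:cost}) are identical in both plans, and by commutativity of multiplication the cumulative product entering position $i+2$ equals $P_{i-1}(x)\, s_c s_i$ either way, so the contributions from $\sigma_{i+1}$ onward also agree. A direct expansion then collapses everything to $\co(p_B,x) - \co(p_A,x) = \Omega\, P_{i-1}(x)\,(s_c - s_i)$. This identity is the main obstacle in the proof: the bookkeeping must be done carefully to confirm that every term outside positions $i$ and $i+1$ really cancels.

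Finally, I would apply Theorem~\ref{th:extreme} to restrict attention to extreme scenarios. Because the optimal plan sorts by selectivity and strict dominance pins down the relative order of every pair except $(\sigma_c, \sigma_i)$, the optimal plan coincides with $p_A$ when $s_i \leq s_c$ and with $p_B$ when $s_i \geq s_c$. Hence in an extreme scenario with $s_i = \ul{s}_i$ we have $\ar(p_A,x) = 0$ and $\ar(p_B,x) = \Omega\, P_{i-1}(x)\,(s_c - \ul{s}_i)$, and symmetrically at $s_i = \ol{s}_i$ we have $\ar(p_B,x) = 0$ and $\ar(p_A,x) = \Omega\, P_{i-1}(x)\,(\ol{s}_i - s_c)$. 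Maximising over the remaining extreme choices sets $s_j = \ol{s}_j$ for every $j<i$, producing the common multiplier $\Omega \prod_{j=1}^{i-1} \ol{s}_j$; the selectivities for $j>i$ do not enter the formula at all. So the maximum regret of $p_B$ is proportional to $(s_c - \ul{s}_i)$ and that of $p_A$ to $(\ol{s}_i - s_c)$. Comparing these yields case~(1) exactly when $s_c \leq (\ul{s}_i + \ol{s}_i)/2$ and case~(2) exactly when $s_c \geq (\ul{s}_i + \ol{s}_i)/2$, as claimed.
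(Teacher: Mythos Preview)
Your proof is correct and follows essentially the same approach as the paper: reduce via domination (Theorem~\ref{th:dominate}) to the two adjacent placements of $\sigma_c$ relative to $\sigma_i$, compute the regret of each candidate by exploiting that the optimal plan is whichever of $p_A$, $p_B$ sorts $\sigma_c$ and $\sigma_i$ correctly, and compare the resulting expressions $(s_c - \ul{s}_i)$ versus $(\ol{s}_i - s_c)$. Your treatment is arguably a touch more explicit than the paper's in that you spell out why the worst-case scenario for each candidate has the particular value of $s_i$ (the other extreme gives zero regret) and that maximising the common prefactor $P_{i-1}(x)$ over the remaining selectivities yields the same multiplier $\Omega\prod_{j<i}\ol{s}_j$ for both plans, so it cancels in the comparison; the paper leaves these points implicit in its $L^\pi(v-1)$ notation.
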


Note that if $s_c = (\ul{s}_i + \ol{s}_i)/2$, then $\sigma_c$ can be placed either between $\sigma_{i-1}$ and $\sigma_i$ or between $\sigma_i$ and $\sigma_{i+1}$ in $MRO(S')$.

\begin{proof}
	The operators $\sigma_i$ and $\sigma_c$ are always neighbours
	in an MRO solution (appearing after $\sigma_{i-1}$ and before $\sigma_{i+1}$): 
	any operator $\sigma_{h}$ such that $1 \leq h \leq
	i-1$ dominates $\sigma_c$ and $\sigma_i$, and
	$\sigma_c$ and $\sigma_i$
	dominate any operator $\sigma_{j}$ such that $i+1 \leq j \leq
	n$.

	Let us assume that $\sigma_i$ has selectivity $\ul{s}_i$
	in scenario $x$ and selectivity $\ol{s}_i$ in 
	scenario $x'$. 
	Moreover, let us define plan
	$p$ that is constructed from $MRO(S)$ by placing $\sigma_c$ before
	$\sigma_i$, $\pi(v) = c$ and $\pi(v+1) = i$. Similarly we define plan $p'$ 
	by placing $\sigma_c$ after $\sigma_i$ (i.e. $\pi'(v) = i$ and $\pi'(v+1)
	= c$).

	{\em Case} 1: The optimal plan $p_{opt(x)}$ places $\sigma_i$
	before $\sigma_c$ in scenario $x$ (the operators are sorted in
	non-decreasing order of their selectivities), i.e.,
	$\sigma_i$ and $\sigma_c$ are at position $v$ and $v+1$, respectively.
	We now compute the maximum regret of $p$ and $p'$ for
	scenario $x$:
	\begin{small}
		\begin{eqnarray*}
			\co(p_{opt(x)},x) & = & 
			L^\pi(v-1) (1 + \ul{s}_i + \ul{s}_i s_c) + R^\pi(v+2) \\
			\co(p,x) & = & 
			L^\pi(v-1) (1 + s_c + s_c \ul{s}_i) + R^\pi(v+2) \\
			\co(p',x) & = & 
			L^{\pi'}(v-1) (1 + \ul{s}_i + \ul{s}_i s_c) + R^{\pi'}(v+2) \\
		\end{eqnarray*}
	\end{small}
	As $L^\pi(v-1) = L^{\pi'}(v-1)$ and $R^\pi(v+2) = R^{\pi'}(v+2)$,
	we can compute the regret of $p$ and $p'$ as follows:
	\begin{eqnarray}
		\ar(p,x) & = & L^\pi(v-1) \left(s_c - \ul{s}_i\right)\label{eq:maxForScenario_x} \\
		\ar(p',x) & = & 0
	\end{eqnarray}
	So for scenario $x$, plan $p$ has a greater regret than $p'$ and it
	can be calculated by Equation~(\ref{eq:maxForScenario_x}).

	{\em Case} 2: In scenario $x'$, $\sigma_i$ follows $\sigma_c$ in
	$p_{opt(x')}$. For computing the costs of the different plans, this
	means:
	\begin{small}
		\begin{eqnarray*}
			\co(p_{opt(x')},x') & = & 
			L^\pi(v-1) (1 + s_c + s_c \ol{s}_i) + R^\pi(v+2) \\
			\co(p,x') & = & 
			L^\pi(v-1) (1 + s_c + s_c \ol{s}_i) + R^\pi(v+2) \\
			\co(p',x') & = & 
			L^{\pi'}(v-1) (1 + \ol{s}_i + \ol{s}_i s_c) + R^{\pi'}(v+2) \\
		\end{eqnarray*}
	\end{small}
	Consequently, the regret of $p$ and $p'$ is
	\begin{eqnarray}
		\ar(p,x') & = & 0 \\
		\ar(p',x') & = & L^\pi(v-1)
		\left(\ol{s}_i - s_c\right) \label{eq:maxForScenario_y}
	\end{eqnarray}
	In this case (scenario $x'$) plan $p'$ has a greater regret
	than $p$ and it is calculated using
	Equation~(\ref{eq:maxForScenario_y}).

	Comparing both cases we can see that $p$ has a smaller
	maximum regret than $p'$ whenever 
	Eq.~(\ref{eq:maxForScenario_x}) $<$ Eq.~(\ref{eq:maxForScenario_y}).
	This is the case when $s_c < (\ul{s}_i + \ol{s}_i)/2$
	and then we place $\sigma_c$ before $\sigma_i$.
	Similarly, plan $p'$ has a smaller 
	maximum regret than $p'$ whenever 
	Eq.~(\ref{eq:maxForScenario_x}) $>$ Eq.~(\ref{eq:maxForScenario_y}).
	We place $\sigma_c$ after $\sigma_i$ when
	$s_c > (\ul{s}_i + \ol{s}_i)/2$.
	For the breakeven point, i.e. 
	$s_c = (\ul{s}_i + \ol{s}_i)/2$, 
	$\sigma_c$ can be placed before or after $\sigma_i$.
	
\end{proof}

Proposition~\ref{prop:strict-dom-constant} can be generalised to the case in
which each non-constant operator has at most one constant operator nested
within it. An interesting observation about the situation 
described in Proposition~\ref{prop:strict-dom-constant} is that
the worst-case scenario is a max-min scenario.

\begin{proposition}\label{prop:strict-dom-constant-scenario}
	Let $S$ be a strictly dominant set of $n$ operators such that
	$MRO(S) = (\sigma_1,\ldots,\sigma_n)$.  Let $\sigma_c$ be an operator with constant selectivity $s_c$ such that
	$\ul{s}_i \leq s_c \leq \ol{s}_i$, for some $1 \leq i \leq n$, and $S' = S
	\cup \{ s_c \}$.  
	The scenario $(\ol{s}_1,\ldots,\ol{s}_{j-1}, s_c,
	\ul{s}_{j},\ldots,\ul{s}_n)$,
	in which either $\sigma_{j-1}$ or $\sigma_j$ is equal
	to $\sigma_i$,
	is a worst-case scenario for $MRO(S')$. 
\end{proposition}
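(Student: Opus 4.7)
The plan is to combine Theorem~\ref{th:extreme} with the structural observation underlying Proposition~\ref{prop:strict-dom-constant}. By Theorem~\ref{th:extreme} the worst-case scenario for $p^* := MRO(S')$ may be taken to be extreme, so I would restrict attention to extreme scenarios $x$. Because $S$ is strictly dominant, $\ol{s}_k \le \ul{s}_{k+1}$ for every $k<n$, so in any extreme scenario the selectivities $s_1,\ldots,s_n$ are already in non-decreasing order. Combined with the nesting $\ul{s}_i \le s_c \le \ol{s}_i$ (which, by strict dominance, gives $s_k \le s_c$ for $k<i$ and $s_c \le s_k$ for $k>i$), this forces $p_{opt(x)}$ to order the operators of $S$ as $\sigma_1,\ldots,\sigma_n$ and to insert $\sigma_c$ adjacent to $\sigma_i$: immediately before $\sigma_i$ when $s_i=\ol{s}_i$, and immediately after $\sigma_i$ when $s_i=\ul{s}_i$.

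Since Proposition~\ref{prop:strict-dom-constant} also places $\sigma_c$ adjacent to $\sigma_i$ in $p^*$, the plans $p^*$ and $p_{opt(x)}$ differ by at most a single swap of the adjacent operators $\sigma_c$ and $\sigma_i$. Exactly as in the proof of Proposition~\ref{prop:strict-dom-constant}, a direct cost computation gives
\[
\ar(p^*,x) \;=\; M\cdot\lvert s_c - s_i\rvert
\]
when the swap is needed, and $\ar(p^*,x)=0$ otherwise, where $M=\Omega\prod_{k=1}^{i-1}s_k$ depends only on the selectivities of $\sigma_1,\ldots,\sigma_{i-1}$. Crucially, the selectivities $s_{i+1},\ldots,s_n$ do not appear in this regret: they contribute identically to $\co(p^*,x)$ and $\co(p_{opt(x)},x)$ and so cancel.

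Maximizing this regret over extreme scenarios is now routine. In case~(1) of Proposition~\ref{prop:strict-dom-constant}, where $p^*$ places $\sigma_c$ just before $\sigma_i$, the regret is nonzero precisely when $s_i=\ul{s}_i$, and $M$ is maximized by setting $s_k=\ol{s}_k$ for every $k<i$. The selectivities of the operators after $\sigma_i$ are free and may be taken as $\ul{s}_{i+1},\ldots,\ul{s}_n$, producing the max-min scenario $(\ol{s}_1,\ldots,\ol{s}_{i-1},s_c,\ul{s}_i,\ldots,\ul{s}_n)$ with $\sigma_j=\sigma_i$ for $j=i$. The symmetric argument in case~(2) forces $s_i=\ol{s}_i$ and yields $(\ol{s}_1,\ldots,\ol{s}_{i-1},\ol{s}_i,s_c,\ul{s}_{i+1},\ldots,\ul{s}_n)$ with $\sigma_{j-1}=\sigma_i$ for $j=i+1$. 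Either way the scenario has exactly the form stated in the proposition.

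The main obstacle is the first step: arguing that strict dominance, together with the nesting of $\sigma_c$ inside $\sigma_i$, collapses every structural difference between $p^*$ and $p_{opt(x)}$ into a single adjacent swap. Once that reduction is in hand, the cost comparison is the one already carried out in the proof of Proposition~\ref{prop:strict-dom-constant}, and the maximization over extreme scenarios follows by inspection of the single product $M$.
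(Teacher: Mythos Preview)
Your proposal is correct and follows essentially the same route as the paper: invoke the regret computation from Proposition~\ref{prop:strict-dom-constant} (a single adjacent swap of $\sigma_c$ and $\sigma_i$), then argue that the resulting maximum is attained at the stated max-min scenario. You are in fact slightly more careful than the paper on one point: the paper asserts that ``the other selectivities do not influence the regret'' and therefore can be set freely, whereas you correctly observe that the prefix factor $M=\Omega\prod_{k<i}s_k$ does depend on $s_1,\ldots,s_{i-1}$ and is \emph{maximised} by choosing $s_k=\ol{s}_k$, which is precisely what the max-min form requires; the conclusion is the same either way.
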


\begin{proof}
	From Proposition~\ref{prop:strict-dom-constant} we know that 
	if $s_c \leq (\ul{s}_i + \ol{s}_i)/2$, then the minmax regret is computed by
	Equation~\ref{eq:maxForScenario_x} for plan $p$. In plan $p$,
	$\sigma_i$ follows $\sigma_c$ (so $\sigma_j = \sigma_i$)
	and the selectivity of 
	$\sigma_i$ is $\ul{s}_i$. The other selectivities
	do not influence the regret, so we can set the selectivity 
	of the operators $\sigma_1$ to $\sigma_{j-1}$ to the upper 
	bound and the selectivity of the operators 
	$\sigma_{j+1}$ to $\sigma_{n}$ to the lower
	bound.
	If $s_c \geq (\ul{s}_i + \ol{s}_i)/2$, then the minmax regret is computed by
	Equation~\ref{eq:maxForScenario_y} for plan $p'$. In plan $p'$,
	$\sigma_c$ follows $\sigma_i$ (so $\sigma_{j-1} = \sigma_i$)
	and the selectivity of 
	$\sigma_i$ is $\ol{s}_i$. Here we choose the upper bounds
	for the operators $\sigma_1$ to $\sigma_{j-2}$ and
	the lower bounds for the operators
	$\sigma_{j}$ to $\sigma_{n}$. In both cases this
	results in a max-min scenario.
\end{proof}


\section{Max-min Heuristic} 
\label{sec:Max-min_Heuristic}

Computing the regret of every selection ordering for every possible scenario
makes the brute-force algorithm infeasible, since there
are $n!$ different orderings and $2^n$ scenarios, given $n$ operators. So in
order to find an efficient heuristic, we have to significantly reduce
the number of orderings and scenarios. While doing so, we want to leverage the
insights gained from our theoretical investigation.


Let us first look at the number of possible scenarios. As we have seen in the
previous section, max-min scenarios seem to play a special role when it comes
to the maximum regret of a given plan $p$. Intuitively this makes sense, as in
an optimal plan many of the operators $\sigma_i$ located towards the beginning of
$p$ with selectivities $\ol{s_i}$ will trade places with operators $\sigma_j$
located towards the end of $p$ with selectivities $\ul{s_j}$. Consequently,
there tends to be a large difference between the plan $p$ and an optimal plan
for a max-min scenario, leading to a substantial (if not maximal) regret for $p$.
So in our heuristic we aim to generate plans that perform well for
max-min scenarios. This reduces the number of scenarios we have to consider
from $2^n$ to $n+1$.

We now turn to determining the order of the selection operators. 
There are two well-known basic methods for doing this (efficiently).
The first one is constructing a plan by combining partial plans in a way that
leads to an optimised execution order. Very often putting the partial plans
together requires using a heuristic to solve a combinatorial problem. The
second method is to quickly create a complete plan (e.g., by using a simple
heuristic) and then try to improve the plan by rewriting it (e.g., by
swapping or removing and re-inserting operators). In our approach we wanted to
have both options available, so we decided to develop different variants. The
complexity of our heuristic shows slight differences depending on the variant
we use; however, the algorithms we apply all have polynomial complexity.

Our max-min heuristic algorithm, $H(p,q)$, which is in fact a template for a
number of algorithms, is shown as Algorithm~\ref{heuristic-alg}. It is
parameterised by two inputs: $p$, a (possibly empty) starting plan, and $q$, an
order in which to process operators. Clearly, to generate a complete plan
the union of $p$ and $q$ has to contain all the operators. If the intersection
of $p$ and $q$ is empty, our algorithm is similar to insertion
sort: in turn, we consider each operator in $q$ and place it into $p$ at the
position that minimises the regret over all max-min scenarios. 
If an operator in $q$ is already present in $p$, then we remove it from $p$
before re-inserting it. This is equivalent to moving an operator to a
different position. Again we determine the position minimising the regret over
all max-min scenarios.


\begin{algorithm}\caption{$H(p, q)$ \label{heuristic-alg}}
	\ForEach{operator $t$ from the sequence $q$}{
		\lIf{$t$ is in $p$}{remove $t$ from $p$}\;
		Assume $p$ currently comprises $i$ operators\;
		\ForEach{position $j$, $1 \leq j \leq i+1$, in $p$}{
			Temporarily insert $t$ in position $j$ in $p$\;
			\ForEach{max-min scenario for $p$}{
				Calculate the regret of plan $p$\;
				Store the maximum regret for position $j$\;
			}
		}
		Choose as the final position for $t$ in $p$ that which minimises the maximum regret\;
	}
	Return $p$\;
\end{algorithm}

It is clear that the max-min heuristic runs in polynomial-time.
For each partial plan comprising $i$ operators, we consider $i+1$
possible positions for the next operator.  In each of these positions,
we consider $i+2$ max-min scenarios.  Calculating the regret of a plan
with $n$ operators can be done in time $O(n \log n)$.  
Hence the algorithm described above has an overall complexity
of  $O(n^4 \log n)$
(in the worst case $i = n$ for every execution of the outer loop).
However, by computing costs incrementally when an operator moves position and
one max-min scenario moves to the next, we can implement the heuristic to run
in time $O(n^3)$.

\begin{example}\label{ex:heuristic}
	Recall from Example~\ref{ex:dominate} the set $S = \{\sigma_1, \sigma_2, \sigma_3\}$
	of selection operators, with selectivities $s_1=[.2, .8]$, $s_2=[.3, .5]$ and $s_3=[.1, .4]$. 
	Consider our max-min heuristic algorithm, $H(p,q)$, with initial plan $p = \sigma_3 \sigma_1$ 
	and remaining operator $q = \sigma_2$.
	Since $p$ consists of two operators, $\sigma_2$ should be checked in three positions:
	before $\sigma_3$, after $\sigma_1$ and between them.  For each position and resulting plan, 
	the regret is calculated under all max-min scenarios, of which there are four in this example.
	
	As an example, consider the plan in which $\sigma_2$ is placed between $\sigma_3$ and $\sigma_1$. 
	The regret will be calculated for the scenarios 
	$(\ul{s}_3, \ul{s}_2, \ul{s}_1)$, $(\ol{s}_3, \ul{s}_2, \ul{s}_1)$, 
	$(\ol{s}_3, \ol{s}_2, \ul{s}_1)$ and $(\ol{s}_3, \ol{s}_2, \ol{s}_1)$. 
	The maximum regret for this plan is $0.3$ which occurs in scenario $(\ol{s}_3, \ol{s}_2, \ul{s}_1)$. 
	
	Finally, the solution will be the plan with the smallest maximum regret,
	which happens to be $\sigma_3 \sigma_2 \sigma_1$. As a matter of fact, 
	the solution returned by the max-min heuristic is the same as the actual minmax regret solution,
	as was shown in Example~\ref{ex:regret}.	\hfill$\Diamond$
\end{example}

In the following two subsections, we consider various criteria
for choosing an initial plan and for ordering the remaining operators.

\subsection{Choosing an Initial Plan}
\label{sec:Max-min_BaseCase}

Even though we can run our heuristic with an empty initial plan $p$,
i.e., building a solution by inserting all operators one by one, often it makes
sense to start with a prebuilt partial plan.

One particular and important case is that of dominant operators. Given a set $S$ of
operators, if we can identify a subset $S' \subseteq S$ of dominant operators,
we know that we can find an optimal solution $p'$ for $S'$ quickly and that
the relative order of the operators in $p'$ will not change in any optimal
plan for $S$ (see Theorem~\ref{th:dominate}). Thus, taking $p'$ as the initial
plan when calling $H(p,q)$ makes good sense. However, there may be
different ways to choose $S'$, as in general there may be more than one such
dominant set. If we have more than one option, we can use the following criteria to
make a decision: choose the subset $S'$ (1) with the maximum cardinality or
(2) whose operators have the largest total width. As we often encountered
several subsets sharing the same maximum cardinality, we introduced a
tie-breaker: choose the subset $S'$ (3) with the maximum cardinality whose
total width is greatest.
In our experiments, we found that this third approach gave the best overall
results.

\begin{example}\label{ex:initialPlan}
	Recall from Example~\ref{ex:heuristic} the set $S = \{\sigma_1, \sigma_2, \sigma_3\}$
	of selection operators, with selectivities $s_1=[.2, .8]$, $s_2=[.3, .5]$ and $s_3=[.1, .4]$.
	Set $S$ has two dominant subsets: $S_1= \{\sigma_1, \sigma_3\}$ and $S_2= \{\sigma_2, \sigma_3\}$.
	Both obviously satisfy criterion~(1) above, being of maximum cardinality.
	However, if we use criterion~(2), namely the set which has operators with 
	the largest total selectivity width, then we will choose $S_1$ since its
	total width is $0.9$ while that of $S_2$ is $0.5$. $S_1$ would also be 
	chosen according to criterion~(3).
	
	After choosing the preferable subset, we need to produce initial plan $p$
	by sorting the operators in nondecreasing order of their minimum (or maximum)
	selectivities.  Therefore, $p = \sigma_3 \sigma_1$ when $S_1$ is chosen, 
	while $p = \sigma_3 \sigma_2$ if $S_1$ is chosen.  \hfill$\Diamond$
\end{example}

Having an initial plan allows us to combine our algorithm with other
heuristics. We can take the output of another algorithm as our initial plan
$p$ and then refine this result by running $H(p,q)$ on it. Moreover, we can
use the output of $H(p,q)$ as input for another iteration of our own
heuristic.

\subsection{Ordering Criteria}
\label{sec:Max-min_Sorting}

Since our algorithm makes only a single pass over all the operators when
(re-)inserting them into the plan, the order in which operators are considered
may have a significant impact on the final outcome. For example, when
inserting selections into an empty initial plan, operators considered earlier
are tested in fewer positions relative to each other compared to those considered
later.

We have considered two different ordering criteria in
our experiments: interval {\em midpoint\/} (denoted by $M$)
and interval {\em width\/} (denoted by $W$).
Given a selectivity interval $s = [\ul{s},\ol{s}]$,
the midpoint of $s$ is $(\ul{s} + \ol{s})/2$ while
the width of $s$ is $\ol{s} - \ul{s}$.  In each case,
operators can be ordered by non-decreasing (denoted $+$)
or non-increasing (denoted $-$) values.  Overall, the
ordering criteria are denoted by $M+$, $M-$, $W+$ and
$W-$.  So, for example, $W+$ stands for operators being
considered in non-decreasing order of their selectivity
interval width.  

\section{Experimental Results}
\label{sec:Max-min_Exp}

We evaluated the max-min heuristic experimentally, measuring the impact of
different parameters on its performance. We also implemented the brute-force
algorithm for finding optimal solutions in order to evaluate how well the
heuristic performs.

A commodity PC, with 8 GB RAM, Intel Core i5 processor running at
3.19 GHz and Windows 7 Enterprise (64-bit), was used to perform the experiments.
The minmax regret brute-force algorithm and max-min heuristic were 
implemented in Java and compiled with the Eclipse IDE (Juno release),
which is JDK compliant and uses the JavaSE-1.7 execution environment.
The SSB queries were run on a simulation platform written in Ruby 1.9.3.

\subsection{Generating Test Data}

We first generated a synthetic data set 
to investigate the performance of our heuristic.
Each test case corresponded to a set of $k$ selection operators, with $k$
ranging from 2 to 10, and for each $k$ we generated a hundred different
sets.  While $k=2$ is not hard to solve, it was included
for verification purposes (any heuristic has to be able to find the optimal
plan for this simple case). Ten operators was the upper limit we were able to
solve optimally, checking $10! \cdot 2^{10}$ ($\approx$ 3.7 billion) different
costs for each test case. 
For each set of selection operators we determined the lower and upper bounds
of their selectivity intervals by generating $2k$ uniformly distributed random
numbers between 0 and 1.


For real-world data, we used the Enron email data set, as introduced
in Example~\ref{ex:enron}.  Once again, test queries used from 2 to 10
operators/predicates.  For each $n \in [2,10]$, 20 queries
were generated, each with one predicate on {\tt subject} and $n-1$ predicates
on {\tt body}.  The 20 queries were generated by randomly
selecting from 40 keywords for {\tt subject} and 45 keywords
for {\tt body}, and were checked to ensure that each returned
a nonempty answer.

We also evaluated minmax regret optimisation using a version of the
Star Schema Benchmark (SSB) with data skew \cite{SSB13} (SSB itself is 
a variation of the TPC-H benchmark). We generated benchmark data with a
scaling factor of 1, meaning that the central facts table, {\it lineorder},
contains 6,000,197 tuples, and joined all dimensional tables to the 
{\it lineorder} table. We then randomly 
picked from two to ten attributes from a subset of all
available attributes to generate queries. Queries basically consist of a
conjunctive predicate whose clauses are made up of the selected attributes
compared to a random value taken from the attribute's domain, using a less-than
or greater-than operator. The following predicate is an example generated in
our experiments:\\
{\tt orderKey $<$ 2964443 and linenumber $>$ 5 and quantity $<$ 29}.

\subsection{Parameters}

For the synthetic and Enron data sets, we 
looked at the effects of the ordering criteria and the choice of 
initial plan on the quality of our heuristic. Additionally, we 
investigated the impact of running our heuristic multiple times, using the
output of one phase as the initial plan of the next phase.

We measure the performance of our heuristic by defining the {\em regret ratio}
$\lambda(S)$, which is the regret computed by $H(p,q)$ divided by the optimal
regret. More formally, given a set $S$ of selection operators, let us denote the
set of possible plans by $P(S)$ and the set of possible scenarios by $X(S)$.
Recall from Section~\ref{sec:basic-defs} that $R(P(S),X(S))$ then denotes the
optimal regret. Then
\begin{eqnarray*}
	\lambda(S) = \frac{R(H(p,q),X(S))}{R(P(S),X(S))}
\end{eqnarray*}
We only calculate $\lambda(S)$ using the above formula when the optimal minmax regret is non-zero.
As mentioned in Section~\ref{sec:specialCases}, the optimal minmax regret
is zero only when $S$ forms a strictly dominating set.  For such cases,
our max-min heuristic always finds the optimal minmax regret solution,
so we define $\lambda(S)$ to be one.

In view of having multiple test cases per number of selection operators, we
calculate the {\em average regret ratio} and the {\em worst regret ratio}
(simply the maximum value of $\lambda(S)$).

For the Enron data, we calculated selectivity intervals for the {\tt like}
predicates as described in the Introduction.  For each selected keyword,
we ran queries to find the minimum selectivity (given by exact matches
of the keyword) and the maximum selectivity (given by the minimum
selectivity of all 2-grams of the keyword).  This gave rise to a range of
intervals: those with small values such as $[0.0004, 0.01]$ for keyword
`progress' in the {\tt subject}, those with larger values
such as $[0.6, 0.7]$ for `you' in the {\tt body}, and those
with a big range such as $[0.07, 0.6]$ for `price' in the {\tt body}.


For the Star Schema Benchmark we created some very rudimentary histograms
by dividing the domain of an attribute into equal-sized ranges,
counting the number of tuples that fall into each range. We do not keep any
further information on the distribution of tuples within each range of a
histogram. For example, Figure~\ref{fig:ordtotalprice} shows the histogram
for the attribute ordtotalprice, consisting of 20 ranges each covering
roughly 18,000 different values, e.g., bucket \#1 covers the range from 1 to
17,673.

\begin{figure}
	\begin{center}
		\includegraphics[width=0.6\textwidth]{./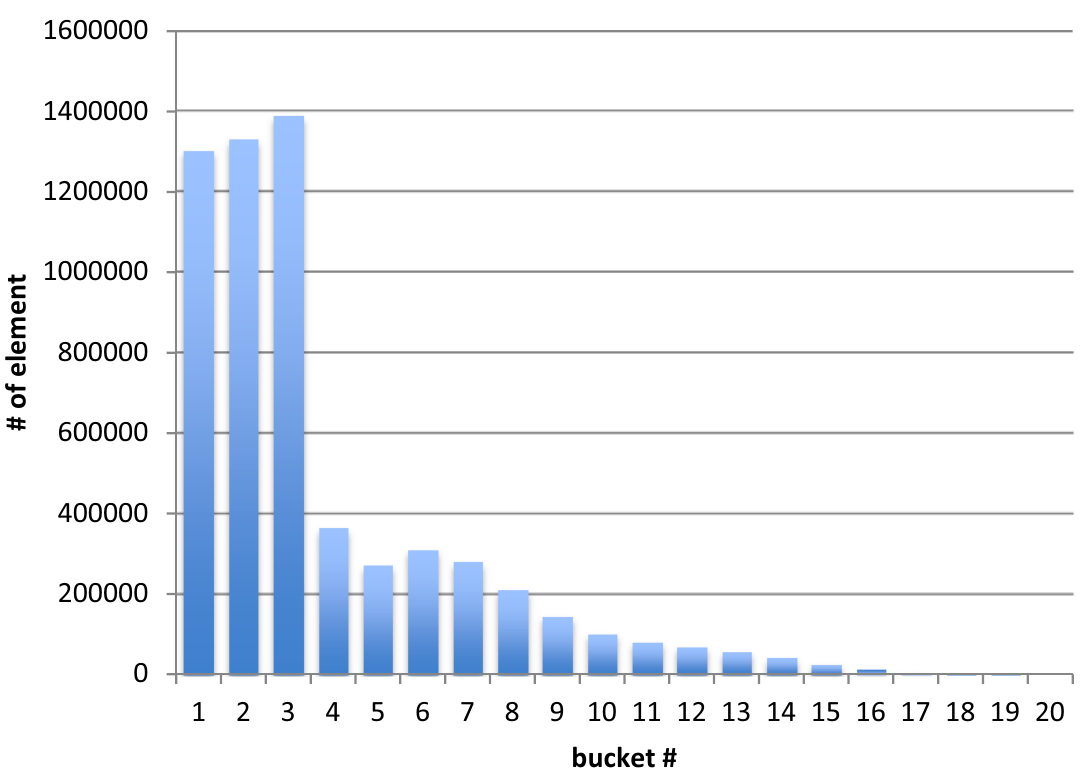}
	\end{center}
	\caption{Histogram for attribute {\tt ordtotalprice}.}
	\label{fig:ordtotalprice}
\end{figure}

This basic information allows us to determine intervals for the
selectivities of selection operators. For a 
``less than'' / ``greater than''
operator, we know that all histogram ranges exclusively 
covering smaller/larger values have to be included fully.
However, for the range the predicate value falls into, we do not know
precisely how many elements will be selected. In extreme cases,
none or all of the elements satisfy the predicate,
giving us the lower and upper bound for the selectivity.
Example~\ref{ex:histogram} illustrates this with concrete values.

\begin{example}\label{ex:histogram}
	Given the histogram for attribute $X$ below 
	and the predicate $X < 126$, we can compute the lower bound 
	and upper bound for the selectivity as follows:
	lower bound = $\frac{200}{1000} = 0.2$,
	upper bound = $\frac{200+100}{1000} = 0.3$.
	
	\begin{center}
		\begin{tabular}{|c|c|}
			\hline
			Range & \# of elements \\
			\hline
			\multicolumn{1}{|r|}{1-100} & 200 \\
			101-200 & 100 \\
			201-300 & 400 \\
			301-400 & 300 \\
			\hline
		\end{tabular}
	\end{center}
	\hfill$\Diamond$
\end{example}


Many sophisticated query optimisation techniques, such as least expected cost
(LEC), assume that they have access to probability distributions of parameter
values. LEC needs this to be able to compute utilities \cite{p138-chu}.
However, in our case we only have very rudimentary statistics, since we do not
know anything about the distribution of attribute values within a range. The
best we can do is to fall back on the assumption of uniform distribution,
approximating the distribution using a mean value (this is also what least
specific cost (LSC) optimisation would do in this case). For example, applying
this method to the numbers given in Example~\ref{ex:histogram} would yield a
selectivity of 0.225 for the predicate $X < 126$. We compare our minmax regret
optimisation technique to a mean-value-based approach using SSB data.
Additionally, we do a comparison with a simple midpoint heuristic, i.e.,
sorting the intervals in non-decreasing order of their midpoint.

\subsection{Results}

First we present the results obtained studying the different variants of the
max-min heuristic on the synthetic and Enron data, and then move on to the Star Schema Benchmark results.

\subsubsection{Synthetic and Enron Data Sets}


We experimented with a number of operator 
ordering criteria and initial plans for the max-min heuristic.
These included starting with an empty initial plan ($\emptyset$), considering random
operator ordering (U), ordering by midpoint (M- and M+) and ordering by width (W- and W+).  
We briefly summarise the findings of our experiments here. Overall,
the W+ ordering (non-decreasing width) performed best with an overall average
regret ratio of 1.03 and an overall worst regret ratio of 1.94. 
W- was often even worse than a random order,
while M+ and M- sometimes generated plans whose regret ratio was above 3. 
We also ran a midpoint heuristic that simply ordered the intervals in
non-decreasing order of their midpoints (not going through all max-min
scenarios). The midpoint heuristic was often worse than running the max-min
heuristic with a random order.

While W+ ordering performs better than the M+, \mbox{M-,}
and W- max-min heuristics and
the midpoint heuristic, it is still not significantly
better than the random ordering. In a second phase of our 
evaluation we seeded our heuristic with an initial plan. 
The results for initial plan D:CW with operator
ordering W+ were best
(D:CW stands for the largest subset of dominant operators,
and, in case of a tie, the one with the greatest total width of the
operators) in terms of the percentage of exact solutions 
and the average regret ratio. The results
for the worst case regret ratio were rather
inconclusive, so we tried to improve on this by running multiple phases of our
heuristic.

\begin{figure*}
	\begin{tabular}{ccc}

		\begin{tikzpicture}
		\begin{axis}[%
		smallgraph,
		xmode=normal,
		ymode=normal,
		ymin=-0.02,
		ymax=1.02,
		legend style={cells={anchor=west}, legend pos=south west, yshift=-1mm},
		legend plot pos=left,
		xlabel= total number of operators,
		ylabel= \% of cases with exact solution]
		\pgfplotstableread{./exact_sol.txt}\exactsol
		\addplot[color=blue,mark=diamond] table[x=x_axis, y=midp] {\exactsol};
		\addlegendentry{{\smaller[2]midpoint}}
		\addplot[color=darkgray,mark=o] table[x=x_axis, y=empty_U] {\exactsol};
		\addlegendentry{{\smaller[2]($\emptyset$,U)}}
		\addplot[color=violet,mark=star] table[x=x_axis, y=DW] {\exactsol};
		\addlegendentry{{\smaller[2](D:CW,W+)}}
		\addplot[color=orange,mark=square] table[x=x_axis, y=DW_W] {\exactsol};
		\addlegendentry{{\smaller[2]((D:CW,W+),W+)}}
		\addplot[color=black,mark=triangle] table[x=x_axis, y=DW_W_W] {\exactsol};
		\addlegendentry{{\smaller[2](((D:CW,W+),W+),W+)}}
		\end{axis}
		\end{tikzpicture}
		& \hspace*{-5.5mm} &
		\begin{tikzpicture}
		\begin{axis}[%
		smallgraph,
		xmode=normal,
		ymode=normal,
		ymin=0.9,
		ymax=3.0,
		legend style={cells={anchor=east}, legend pos=north west, yshift=-1mm},
		legend plot pos=right,
		xlabel= total number of operators,
		ylabel= worst regret ratio]
		\pgfplotstableread{./worst_regret.txt}\worstregret
		\addplot[color=blue,mark=diamond] table[x=x_axis, y=midp] {\worstregret};
		\addlegendentry{{\smaller[2]midpoint}}
		\addplot[color=darkgray,mark=o] table[x=x_axis, y=empty_U] {\worstregret};
		\addlegendentry{{\smaller[2]($\emptyset$,U)}}
		\addplot[color=violet,mark=star] table[x=x_axis, y=DW] {\worstregret};
		\addlegendentry{{\smaller[2](D:CW,W+)}}
		\addplot[color=orange,mark=square] table[x=x_axis, y=DW_W] {\worstregret};
		\addlegendentry{{\smaller[2]((D:CW,W+),W+)}}
		\addplot[color=black,mark=triangle] table[x=x_axis, y=DW_W_W] {\worstregret};
		\addlegendentry{{\smaller[2](((D:CW,W+),W+),W+)}}
		\end{axis}
		\end{tikzpicture}
		\\
		(a) Percentage of exact solutions &\hspace*{-5.5mm}&
		(b) Worst regret ratio \\
		\begin{tikzpicture}
		\begin{axis}[%
		smallgraph,
		xmode=normal,
		ymode=normal,
		ymin=0.99,
		ymax=1.2,
		legend style={cells={anchor=west}, legend pos=north west, yshift=-1mm},
		legend plot pos=left,
		xlabel= total number of operators,
		ylabel= average regret ratio]
		\pgfplotstableread{./avg_regret.txt}\avgregret
		\addplot[color=blue,mark=diamond] table[x=x_axis, y=midp] {\avgregret};
		\addlegendentry{{\smaller[2]midpoint}}
		\addplot[color=darkgray,mark=o] table[x=x_axis, y=empty_U] {\avgregret};
		\addlegendentry{{\smaller[2]($\emptyset$,U)}}
		\addplot[color=violet,mark=star] table[x=x_axis, y=DW] {\avgregret};
		\addlegendentry{{\smaller[2](D:CW,W+)}}
		\addplot[color=orange,mark=square] table[x=x_axis, y=DW_W] {\avgregret};
		\addlegendentry{{\smaller[2]((D:CW,W+),W+)}}
		\addplot[color=black,mark=triangle] table[x=x_axis, y=DW_W_W] {\avgregret};
		\addlegendentry{{\smaller[2](((D:CW,W+),W+),W+)}}
		\end{axis}
		\end{tikzpicture}
		& \hspace*{-5.5mm} &
		\begin{tikzpicture}
		\begin{axis}[%
		smallgraph,
		xmode=normal,
		ymode=normal,
		legend style={cells={anchor=east}, legend pos=north west, yshift=-1mm},
		legend plot pos=right,
		xlabel= total number of operators,
		ylabel= run time (seconds)]
		\pgfplotstableread{./run_time.txt}\avgregret
		%
		\addplot[color=blue,mark=diamond] table[x=x_axis, y=midpoint] {\avgregret};
		\addlegendentry{{\smaller[2]midpoint}}
		\addplot[color=darkgray,mark=o] table[x=x_axis, y=empty_U] {\avgregret};
		\addlegendentry{{\smaller[2]($\emptyset$,U)}}
		\addplot[color=violet,mark=star] table[x=x_axis, y=DW] {\avgregret};
		\addlegendentry{{\smaller[2](D:CW,W+)}}
		\addplot[color=orange,mark=square] table[x=x_axis, y=DW_W] {\avgregret};
		\addlegendentry{{\smaller[2]((D:CW,W+),W+)}}
		\addplot[color=black,mark=triangle] table[x=x_axis, y=DW_W_W] {\avgregret};
		\addlegendentry{{\smaller[2](((D:CW,W+),W+),W+)}}
		\end{axis}
		\end{tikzpicture}
		\\
		
		(c) Average regret ratio 
		&&
		(d) Run time 
		\\
	\end{tabular}
	\caption{Results for synthetic data set.}
	\vspace*{-.4cm}
	\label{fig:artificial}
\end{figure*}
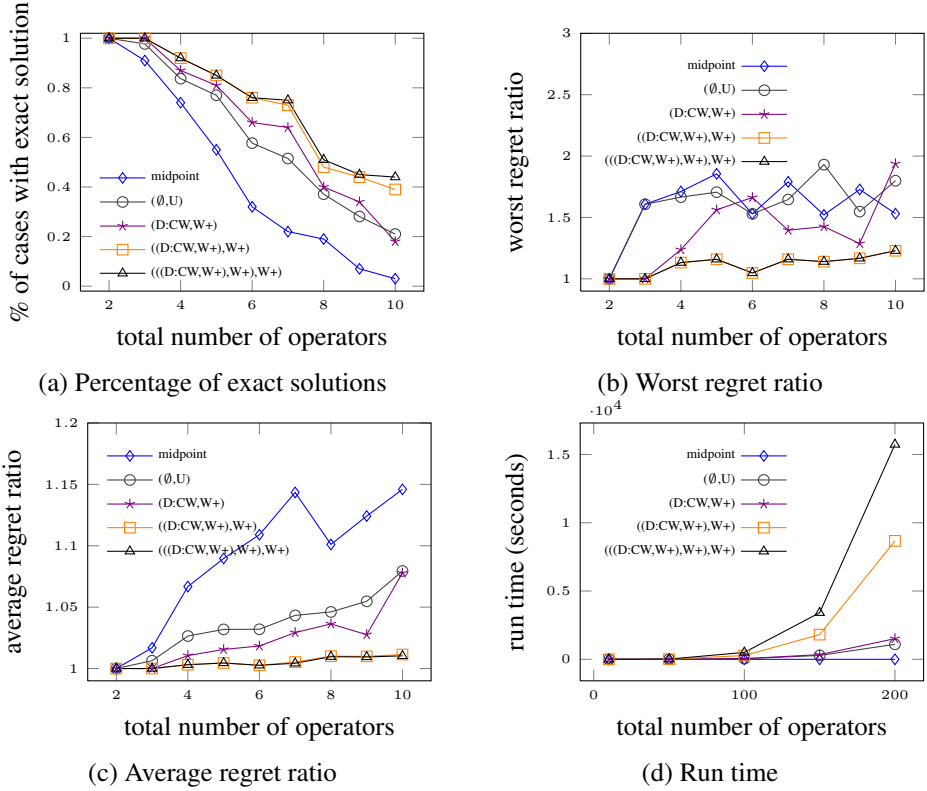


Figures~\ref{fig:artificial}(a), (b) and (c)
show the results for running our heuristic multiple times. This means that we
take the output of running one phase of our heuristic
and use it as the initial plan for the next phase. The figures show the
results for starting off by running
(D:CW, W+) first and then executing two more phases.

As can be seen, this variant clearly outperforms the baseline algorithm 
($\emptyset$,U), the midpoint heuristic, and
the other variants in all respects.   For example, for 10 operators, the
worst regret ratio is less than $1.23$ and the average ratio is approximately
$1.01$, compared to approximately $1.94$ and $1.08$, respectively, for running
only a single phase of the heuristic.  Moreover, running one additional phase
improves the quality of the generated plan significantly, but running another
phase 
makes almost no difference.

Figure~\ref{fig:artificial}(d) 
shows the run time of the W+ ordering variant (single
and multiple phases) together with the baseline algorithm ($\emptyset$,U) when
generating plans for up to 200 operators. Unsurprisingly, the variants
midpoint, ($\emptyset$,U), and (D:CW,W+) have the 
fastest run times, as they only sort a set of operators or execute
a single operator insertion phase. Furthermore, it can be clearly seen that
the additional run time of (((D:CW,W+),W+),W+) does not pay off, since it
produces plans that are only marginally better than those of 
((D:CW,W+),W+).


\begin{figure*}
	\begin{tabular}{ccc}

		\begin{tikzpicture}
		\begin{axis}[%
		smallgraph,
		xmode=normal,
		ymode=normal,
		ymin=-0.02,
		ymax=1.02,
		legend style={cells={anchor=west}, legend pos=south west, yshift=-1mm},
		legend plot pos=left,
		xlabel= total number of operators,
		ylabel= \% of cases with exact solution]
		\pgfplotstableread{./Enron_exact_sol.txt}\exactsol
		\addplot[color=blue,mark=diamond] table[x=x_axis, y=midp] {\exactsol};
		\addlegendentry{{\smaller[2]midpoint}}
		\addplot[color=darkgray,mark=o] table[x=x_axis, y=empty_U] {\exactsol};
		\addlegendentry{{\smaller[2]($\emptyset$,U)}}
		\addplot[color=violet,mark=star] table[x=x_axis, y=DW] {\exactsol};
		\addlegendentry{{\smaller[2](D:CW,W+)}}
		\addplot[color=orange,mark=square] table[x=x_axis, y=DW_W] {\exactsol};
		\addlegendentry{{\smaller[2]((D:CW,W+),W+)}}
		\addplot[color=black,mark=triangle] table[x=x_axis, y=DW_W_W] {\exactsol};
		\addlegendentry{{\smaller[2](((D:CW,W+),W+),W+)}}
		\end{axis}
		\end{tikzpicture}
		& \hspace*{-5.5mm} &
		\begin{tikzpicture}
		\begin{axis}[%
		smallgraph,
		xmode=normal,
		ymode=normal,
		ymin=0.9,
		ymax=3.0,
		legend style={cells={anchor=east}, legend pos=north west, yshift=-1mm},
		legend plot pos=right,
		xlabel= total number of operators,
		ylabel= worst regret ratio]
		\pgfplotstableread{./Enron_worst_regret.txt}\worstregret
		\addplot[color=blue,mark=diamond] table[x=x_axis, y=midp] {\worstregret};
		\addlegendentry{{\smaller[2]midpoint}}
		\addplot[color=darkgray,mark=o] table[x=x_axis, y=empty_U] {\worstregret};
		\addlegendentry{{\smaller[2]($\emptyset$,U)}}
		\addplot[color=violet,mark=star] table[x=x_axis, y=DW] {\worstregret};
		\addlegendentry{{\smaller[2](D:CW,W+)}}
		\addplot[color=orange,mark=square] table[x=x_axis, y=DW_W] {\worstregret};
		\addlegendentry{{\smaller[2]((D:CW,W+),W+)}}
		\addplot[color=black,mark=triangle] table[x=x_axis, y=DW_W_W] {\worstregret};
		\addlegendentry{{\smaller[2](((D:CW,W+),W+),W+)}}
		\end{axis}
		\end{tikzpicture}
		\\
		(a) Percentage of exact solutions &\hspace*{-5.5mm}&
		(b) Worst regret ratio \\
		\multicolumn{3}{c}{
			\begin{tikzpicture}
			\begin{axis}[%
			smallgraph,
			xmode=normal,
			ymode=normal,
			ymin=0.99,
			ymax=1.2,
			legend style={cells={anchor=west}, legend pos=north west, yshift=-1mm},
			legend plot pos=left,
			xlabel= total number of operators,
			ylabel= average regret ratio]
			\pgfplotstableread{./Enron_avg_regret.txt}\avgregret
			\addplot[color=blue,mark=diamond] table[x=x_axis, y=midp] {\avgregret};
			\addlegendentry{{\smaller[2]midpoint}}
			\addplot[color=darkgray,mark=o] table[x=x_axis, y=empty_U] {\avgregret};
			\addlegendentry{{\smaller[2]($\emptyset$,U)}}
			\addplot[color=violet,mark=star] table[x=x_axis, y=DW] {\avgregret};
			\addlegendentry{{\smaller[2](D:CW,W+)}}
			\addplot[color=orange,mark=square] table[x=x_axis, y=DW_W] {\avgregret};
			\addlegendentry{{\smaller[2]((D:CW,W+),W+)}}
			\addplot[color=black,mark=triangle] table[x=x_axis, y=DW_W_W] {\avgregret};
			\addlegendentry{{\smaller[2](((D:CW,W+),W+),W+)}}
			\end{axis}
			\end{tikzpicture}}
		\\
		\multicolumn{3}{c}{
			(c) Average regret ratio} \\
	\end{tabular}
	\caption{Results for Enron data set.}
	\vspace*{-.5cm}
	\label{fig:enron}
\end{figure*}
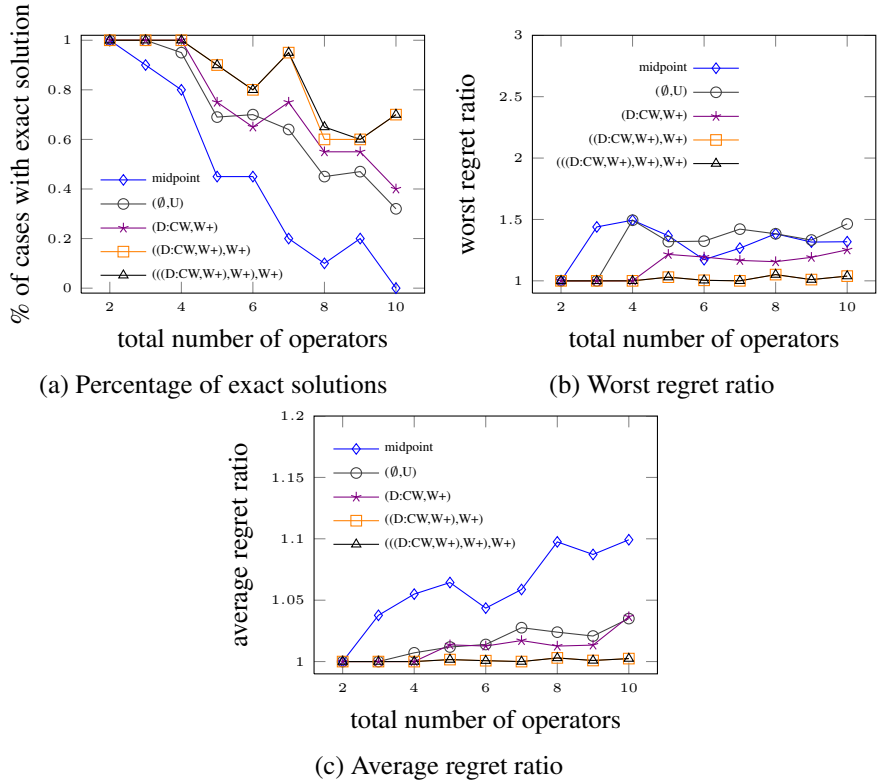

The results on the Enron data set (Figure~\ref{fig:enron}) showed similar 
trends\footnote{The run time was exactly the same, which is why we are omitting
	the diagram here.},
but were more impressive in every respect.
The two- and three-phase variants of the max-min heuristic
found the minmax optimal solution in 84\% of cases, had a
worst regret ratio of only 1.05, and an
average regret ratio of less than 1.001.
By contrast, the midpoint heuristic had a worst regret ratio of over 1.49, an average of 1.06,
and did not find a single minmax optimal solution with 10 operators

To highlight how bad a poor choice of selectivity can be,
we also tested using the minimum selectivity values of the intervals
(as would be done if estimates were based simply on
the selectivity of the keywords themselves).  This produced a worst case
regret ratio of {\em almost 30 for only 5 operators}.

\subsubsection{Star Schema Benchmark}

We optimised the generated SSB queries using minmax regret optimisation, a
mean-value-based approach, and also computed the optimal execution plan using
exact selectivities, which means that we are comparing actual query plan costs
rather than regret ratios.

Figure~\ref{fig:furtherresults}(a) 
shows the results for the average difference in 
costs between the query execution plans generated by 
different methods and the optimal plan (every data
point in the diagram averages the measurement obtained by 
running 100 different queries). 
We only include two variants of minmax regret optimisation,
(D:CW,W+) and the simple midpoint heuristic,
as for SSB no major differences were discernible between the different 
variants in terms of the quality of the query plans.
Surprisingly, the midpoint heuristic, although not very good at optimising the
regret ratio, seems to produce efficient query execution plans.
Considering the fact that
all queries had an average run time between 60 and 80 seconds, the numbers
shown in Figure~\ref{fig:furtherresults}(a) may not
seem like a big difference. However, this shows that minmax regret
optimisation delivers better plans than a mean-value-based approach.

More important is the robustness of the approaches, i.e., how good are they in
avoiding bad plans? Figure~\ref{fig:furtherresults}(b) 
shows the standard deviation of
the cost difference to the optimal plan, illustrating that the mean-value-based
approach is more erratic than minmax regret optimisation. The most extreme case
for all SSB queries was a mean-value-optimised plan more than
doubling the run time of the optimal plan (from 60s to 135s), while for minmax
regret optimisation the very worst plan added roughly 50\% more to the cost of 
the optimal plan (from 60s to 92s).


\begin{figure}
	\begin{tabular}{ccc}
		\begin{tikzpicture}
		\begin{axis}[%
		smallgraph,
		xmode=normal,
		ymode=normal,
		ymin=0.0,
		ymax=3.0,
		legend style={cells={anchor=east}, legend pos=north west, yshift=-1mm},
		legend plot pos=right,
		xlabel= total number of operators,
		ylabel= average diff in secs]
		\pgfplotstableread{./avg_ssb.txt}\avgssb
		\addplot[color=olive,mark=o] table[x=x_axis, y=mean] {\avgssb};
		\addlegendentry{{\smaller[2]mean value}}
		\addplot[color=violet,mark=star] table[x=x_axis, y=wplus] {\avgssb};
		\addlegendentry{{\smaller[2](D:CW,W+)}}
		\addplot[color=blue,mark=diamond] table[x=x_axis, y=midp] {\avgssb};
		\addlegendentry{{\smaller[2]midpoint}}
		\end{axis}
		\end{tikzpicture} 
		& \hspace*{-5.5mm} &
		\begin{tikzpicture}
		\begin{axis}[%
		smallgraph,
		xmode=normal,
		ymode=normal,
		ymin=0.0,
		ymax=9.0,
		legend style={cells={anchor=east}, legend pos=north west, yshift=-1mm},
		legend plot pos=right,
		xlabel= total number of operators,
		ylabel= std dev of diff in secs]
		\pgfplotstableread{./std_ssb.txt}\stdssb
		\addplot[color=olive,mark=o] table[x=x_axis, y=mean] {\stdssb};
		\addlegendentry{{\smaller[2]mean value}}
		\addplot[color=violet,mark=star] table[x=x_axis, y=wplus] {\stdssb};
		\addlegendentry{{\smaller[2](D:CW,W+)}}
		\addplot[color=blue,mark=diamond] table[x=x_axis, y=midp] {\stdssb};
		\addlegendentry{{\smaller[2]midpoint}}
		\end{axis}
		\end{tikzpicture} \\
		(a) Average difference in cost to opt &\hspace*{-5.5mm}&
		(b) Standard deviation of difference in cost to optimal \\
	\end{tabular}
	\caption{Results on difference in cost to optimal.}
	\vspace*{-.4cm}
	\label{fig:furtherresults}
\end{figure}
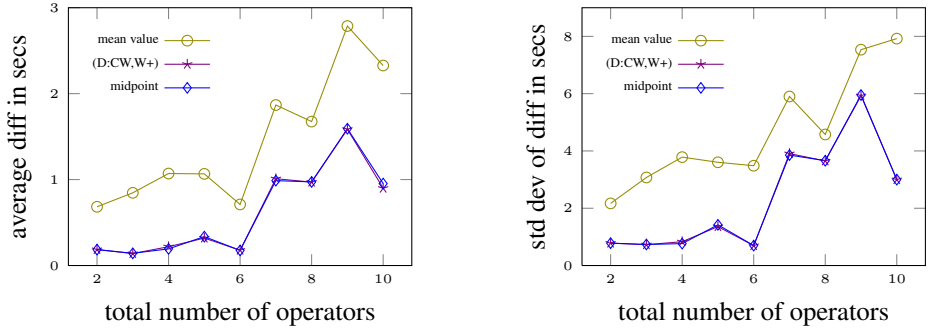

\section{Conclusion} 
\label{sec:conclusion}

We have investigated query optimisation under partial ignorance, in particular
ordering selection operators optimally if their selectivities are
defined by an interval rather than an exact value. 
The strategy we employed, minmax regret optimisation (MRO), is considered to
be a pessimistic approach compared to other techniques from decision
theory. In our opinion this makes it well-suited to query optimisation in
database systems, which should be about avoiding bad plans rather than finding
the best one. There is one major drawback, though: selection ordering becomes
NP-hard when applying MRO to it. However, we have shown that
special cases can be solved efficiently and that heuristics can quickly find
good solutions.

For future work we plan to extend our approach to costs described by intervals
and relative regret, i.e., considering the ratio of the cost of a plan to the
optimal plan for a scenario rather than the difference. Also interesting are 
other operators, such as
joins, whose ordering is heavily influenced by selectivities as well and suffers
from similar issues: it is hard to obtain exact values. Further topics we would
like to tackle are finding approximation algorithms with proven bounds and
modelling correlation of query predicates. Nevertheless, we think this is an
important first step in discovering new approaches for making query optimisers
more robust and one of our medium term goals is to build a general framework
for query optimisation under partial ignorance.

\balance
\bibliographystyle{plain}
\bibliography{./paperShort}

\begin{thebibliography}{10}

\bibitem{survey}
Hassene Aissi et~al.
\newblock Min-max and min-max regret versions of combinatorial optimization
  problems: A survey.
\newblock {\em European J. of Operational Research}, 197(2):427--438, 2009.

\bibitem{AvHe00}
R.~Avnur and J.~M. Hellerstein.
\newblock Eddies: continuously adaptive query processing.
\newblock In {\em SIGMOD}, pages 261--272, 2000.

\bibitem{Babcock05}
Brian Babcock and Surajit Chaudhuri.
\newblock Towards a robust query optimizer: a principled and practical
  approach.
\newblock In {\em SIGMOD}, pages 119--130, 2005.

\bibitem{BMMNW04}
S.~Babu et~al.
\newblock Adaptive ordering of pipelined stream filters.
\newblock In {\em SIGMOD}, pages 407--418, 2004.

\bibitem{Babu05}
Shivnath Babu, Pedro Bizarro, and David DeWitt.
\newblock Proactive re-optimization.
\newblock In {\em SIGMOD}, pages 107--118, 2005.

\bibitem{CGG04}
Surajit Chaudhuri et~al.
\newblock Selectivity estimation for string predicates: Overcoming the
  underestimation problem.
\newblock In {\em ICDE}, pages 227--238, 2004.

\bibitem{p138-chu}
Francis Chu et~al.
\newblock Least expected cost query optimization: an exercise in utility.
\newblock In {\em PODS}, pages 138--147, 1999.

\bibitem{p293-chu}
Francis Chu et~al.
\newblock Least expected cost query optimization: what can we expect?
\newblock In {\em PODS}, pages 293--302, 2002.

\bibitem{Harish07}
Harish D, Pooja~N. Darera, and Jayant~R. Haritsa.
\newblock On the production of anorexic plan diagrams.
\newblock In {\em VLDB}, pages 1081--1092, 2007.

\bibitem{DGHM05}
A.~Deshpande et~al.
\newblock Exploiting correlated attributes in acquisitional query processing.
\newblock In {\em ICDE}, pages 143--154, 2005.

\bibitem{Gang98}
Sumit Ganguly.
\newblock Design and analysis of parametric query optimization algorithms.
\newblock In {\em VLDB}, pages 228--238, 1998.

\bibitem{Garo02}
Minos Garofalakis and Phillip~B. Gibbons.
\newblock Wavelet synopses with error guarantees.
\newblock In {\em SIGMOD}, pages 476--487, 2002.

\bibitem{Gonzalez1985}
Teofilo~F. Gonzalez.
\newblock Clustering to minimize the maximum intercluster distance.
\newblock {\em TCS}, 38:293--306, 1985.

\bibitem{HelSto93}
Joseph~M. Hellerstein and Michael Stonebraker.
\newblock Predicate migration: optimizing queries with expensive predicates.
\newblock In {\em SIGMOD}, pages 267--276, 1993.

\bibitem{Ioan03}
Yannis Ioannidis.
\newblock The history of histograms (abridged).
\newblock In {\em VLDB}, pages 19--30, 2003.

\bibitem{John56}
S.~M. Johnson.
\newblock Optimal sequential testing.
\newblock RAND Research Memorandum RM1652, RAND Corporation, 1956.

\bibitem{Kabra98}
Navin Kabra and David~J. DeWitt.
\newblock Efficient mid-query re-optimization of sub-optimal query execution
  plans.
\newblock In {\em SIGMOD}, pages 106--117, 1998.

\bibitem{Kasperski-book}
Adam Kasperski.
\newblock {\em Discrete Optimization with Interval Data - Minmax Regret and
  Fuzzy Approach}.
\newblock Springer, 2008.

\bibitem{KBZ86}
R.~Krishnamurthy, H.~Boral, and C.~Zaniolo.
\newblock Optimization of nonrecursive queries.
\newblock In {\em VLDB}, pages 128--137, 1986.

\bibitem{averbakh}
Vasilij Lebedev and Igor Averbakh.
\newblock Complexity of minimizing the total flow time with interval data and
  minmax regret criterion.
\newblock {\em Discrete Appl. Math.}, 154(15):2167--2177, October 2006.

\bibitem{Lohman14}
Guy Lohman.
\newblock Is query optimization a ``solved'' problem?
\newblock \url{http://wp.sigmod.org/?p=1075}, 2014.

\bibitem{Entropy07}
V.~Markl et~al.
\newblock Consistent selectivity estimation via maximum entropy.
\newblock {\em The VLDB Journal}, 16(1):55--76, January 2007.

\bibitem{Markl04}
Volker Markl et~al.
\newblock Robust query processing through progressive optimization.
\newblock In {\em SIGMOD}, pages 659--670, 2004.

\bibitem{Moerkotte14}
Guido Moerkotte et~al.
\newblock Exploiting ordered dictionaries to efficiently construct histograms
  with q-error guarantees in {SAP HANA}.
\newblock In {\em SIGMOD}, pages 361--372, 2014.

\bibitem{Neu13}
Thomas Neumann and C{\'e}sar~A. Galindo-Legaria.
\newblock Taking the edge off cardinality estimation errors using incremental
  execution.
\newblock In {\em BTW}, pages 73--92, 2013.

\bibitem{Peterson09}
Martin Peterson.
\newblock {\em An Introduction to Decision Theory}.
\newblock Cambridge University Press, 2009.

\bibitem{Poly02}
Neoklis Polyzotis and Minos Garofalakis.
\newblock Statistical synopses for graph-structured {XML} databases.
\newblock In {\em SIGMOD}, pages 358--369, 2002.

\bibitem{SSB13}
Tilmann Rabl et~al.
\newblock Variations of the star schema benchmark to test the effects of data
  skew on query performance.
\newblock In {\em ICPE}, pages 361--372, 2013.

\bibitem{Rock70}
Ralph~Tyrrell Rockafeller.
\newblock {\em Convex Analysis}.
\newblock Princeton University Press, 1970.

\bibitem{Sriv05}
Utkarsh Srivastava et~al.
\newblock Operator placement for in-network stream query processing.
\newblock In {\em PODS}, pages 250--258, 2005.

\bibitem{Zado02}
Vladimir Zadorozhny et~al.
\newblock Efficient evaluation of queries in a mediator for websources.
\newblock In {\em SIGMOD}, pages 85--96, 2002.

\bibitem{Zhang05}
Ning Zhang et~al.
\newblock Statistical learning techniques for costing {XML} queries.
\newblock In {\em VLDB}, pages 289--300, 2005.

\end{thebibliography}


\end{document}